\numberwithin{equation}{section}
\title{Statistical Verification of Autonomous Systems using Surrogate Models and Conformal Inference}
\author{Chuchu Fan\inst{1}, Xin Qin \inst{2}, Yuan Xia \inst{2} , Aditya Zutshi\inst{3}, Jyotirmoy V. Deshmukh \inst{2}}
\institute{Massachusetts Institute of Technology  \and University of Southern California \and Galois Inc.} 
\newcommand{\reals}{\mathbb{R}}
\newcommand{\nnreals}{\mathbb R^{\geq 0}}
\newcommand{\restr}{\mathrel{\downarrow}}
\newcommand{\reach}[2]{\relax\ifmmode {\sf Reach}_{#1}(#2) \else ${\sf Reach}_{#1}(#2)$\fi} 
\newcommand{\reachhs}[1]{\relax\ifmmode {\sf Reach}_{#1} \else ${\sf Reach}_{#1}$\fi} 
\newcommand{\traj}{\xi}
\newcommand{\trajset}{\Xi}
\newcommand{\dom}{{\mathsf{dom}}}
\newcommand{\execs}[1]{\relax\ifmmode {\sf Execs}(#1) \else ${\sf Exec}(#1)$\fi} 
\newcommand{\traces}[1]{\relax\ifmmode {\sf Traces}(#1) \else ${\sf Tracec}(#1)$\fi}
\DeclareMathOperator*{\argmin}{arg\,min}
\newcommand{\simulator}{{\mathsf{sim}}}
\newcommand{\Lmode}[1]{\relax\ifmmode {\sf {#1}} \else ${\sf {#1}}$\fi}
\DeclareRobustCommand*\cal{\@fontswitch\relax\mathcal}
\newcommand{\states}{\mathcal{X}}
\newcommand{\num}[1]{\relax\ifmmode \mathbb #1\else $\mathbb #1$\fi}
\newcommand{\confint}{\mathtt{ConfInt}}
\newcommand{\paramspace}{\Theta}
\newcommand{\always}{\mathbf{G}}
\newcommand{\until}{\mathbf{U}}
\newcommand{\eventually}{\mathbf{F}}
\newcommand{\rob}{\rho}
\newcommand{\distribution}{\mathcal{D}}
\newcommand{\regpara}{\eta}
\newcommand{\pdf}{f}
\newcommand{\param}{\theta}
\newcommand{\simint}{\overline{\mathsf{sim}}}
\newcommand{\regalg}{\mathtt{Reg}}
\newcommand{\matlab}{Matlab\textsuperscript{\textregistered}\xspace}
\newcommand{\initPos}{x_{\mathrm{init}}}
\newcommand{\initVel}{v_{\mathrm{init}}}
\newcommand{\myipara}[1]{\vspace{0.2em} \noindent{\em #1}.}
\newcommand{\mypara}[1]{\vspace{0.2em} \noindent\textbf{#1}.}
\newcommand{\specmc}{\varphi_\textsc{mc}}
\newcommand{\speclkasettle}{\varphi_{\textsc{lka},\mathrm{settle}}}
\newcommand{\speclkabounds}{\varphi_{\textsc{lka},\mathrm{bounds}}}
\newcommand{\oldstuff}[1]{}
\newcommand{\model}{\mathcal{M}}
\newcommand{\surrogate}{\hat{\mu}}
\newcommand{\prob}{P}
\newcommand{\expectedvalue}{\mathbb{E}}
\newcommand{\Reg}{\mathcal{R}}
\newcommand{\Idx}{\mathcal{I}}
\newcommand{\sampleParams}{\widehat{\paramspace}}
\newcommand{\jointdist}{\distribution_{\param,\rob(\varphi)}}
\newcommand{\vmin}{v_{\min}}
\newcommand{\vmax}{v_{\max}}
\newcommand{\vstarmin}{v^*_{\min}}
\newcommand{\vstarmax}{v^*_{\max}}
\newcommand{\mindia}{\Delta}
\begin{document}
\maketitle

\begin{abstract} 
In this paper, we propose conformal inference based approach for statistical verification of CPS models.
Cyber-physical systems (CPS) such as autonomous vehicles, avionic
systems, and medical devices operate in highly uncertain environments.
This uncertainty is typically modeled using a finite number of parameters or input signals. Given a system
specification in Signal Temporal Logic (STL), we would like to verify that for all (infinite) values of the model
parameters/input signals, the system satisfies its specification.
Unfortunately, this problem is undecidable in general. {\em Statistical model checking} (SMC) offers a solution by providing guarantees on the correctness of CPS models by statistically reasoning on model simulations. 
We propose a new approach for statistical verification of
CPS models for user-provided distribution on the model parameters. Our
technique uses model simulations to learn {\em surrogate models}, and
uses {\em conformal inference} to provide probabilistic
guarantees on the satisfaction of a given STL property. 
Additionally,
we can provide prediction intervals containing the the quantitative 
satisfaction values of the given STL property for any user-specified 
confidence level. 
We also propose a refinement
procedure based on Gaussian Process (GP)-based surrogate models for obtaining fine-grained probabilistic guarantees over sub-regions in
the parameter space. This in turn enables the CPS designer to choose assured
validity domains in the parameter space for safety-critical
applications. Finally, we demonstrate the efficacy of our technique on several
CPS models.

\end{abstract}

\section{Introduction}\label{sec:intro}
Cyber-physical systems (CPS) such as robotic ground and aerial
vehicles, medical devices and industrial controllers are highly
complex systems with nonlinear behaviors that operate in uncertain operating
environments. As these systems are often safety-critical, it is desirable to obtain strong assurances on their safe operation.
To achieve this goal, recent research has been focused on effective
and sound verification
algorithms~\cite{verifyAI,Verisig,Reluplex,Yasser,Taylor,dutta2018learning,huang2017safety,zareistatistical},
and scalable best-effort approaches which lack explicit coverage
guarantees~\cite{hoxha2014towards,mooney1997monte,yaghoubi2019gray}.
However, factors like complexity and stochasticity of the operating
environments, curse of dimensionality, the nonlinearity of dynamics
pose a significant scalability challenge for verification procedures.

In this paper, we address the problem of analyzing the effects of uncertainty in
the environment on the correctness of a given CPS model $\model$.
We assume that the aleatoric uncertainty in
the environment is modeled as discrete-time input signals to $\model$,
where the signal value at each discrete time-step is assumed to come
from some distribution. In addition, we also consider {\em static}
parameters of $\model$, i.e. parameters that have a fixed value during
a simulation run; examples include design parameters of $\model$ or
parameters modeling variability in the sensor measurements and
actuation errors. We can effectively lump these different parameters
into a single multivariate random variable $\param$ that takes values
from some compact set $\paramspace$, distributed according to some
user-provided distribution $\distribution_\paramspace$.  For any
single sample of $\param$, we assume that the output trajectories of
the model (denoted $\traj_\param$) are deterministic, i.e. the model
is free of any {\em internal} stochastic behavior.  A distribution on
the parameters induces a distribution on the output trajectories
$\traj_\param$ of the model. 

The correctness of CPS models can be expressed using Signal Temporal
Logic (STL) \cite{maler2004} formulas over the model trajectories. STL
formulas are evaluated on trajectories, i.e. signals, and in this
paper we assume signals to be finite sequences of time-value pairs.
STL formulas have both Boolean and quantitative satisfaction semantics. The Boolean semantics are interpreted as follows: a signal $x$ either satisfies a formula $\varphi$ (denoted $x \models \varphi$) or it does not satisfy $\varphi$ (denoted $x \not\models \varphi$). The quantitative semantics are defined using a {\em robustness} value
$\rob(\varphi,x)$ \cite{donze_robust_2010,fainekos_robustness_2009}.
Intuitively, $\rob(\varphi,x)$ gives a degree of satisfaction of
$\varphi$ by $x$.  In this paper we are interested in answering the
following two questions:
\begin{enumerate}[itemsep=1pt,leftmargin=1.3em]
\item

For some user-provided threshold $\epsilon$, and $\param \sim
\distribution_\paramspace$, is the probability of the model
behavior satisfying a given STL property $\varphi$ more than
$1-\epsilon$, or 
\begin{align}
\label{eq:probsatcheck}
\param \sim \distribution_\paramspace \implies \prob(\model(\param) \models \varphi) \ge 1-\epsilon
\end{align}

\item

For some user-provided threshold $\epsilon$, and $\param \sim
\distribution_\paramspace$, can we find an interval $[\ell,u]$ s.t. 
the probability that the robustness value of a model behavior
$\model(\param)$ w.r.t. a given STL property $\varphi$ lies in
$[\ell,u]$ with probability greater than $1-\epsilon$? I.e.,
\begin{align}
\label{eq:probrobcheck}
\param \sim \distribution_\paramspace \implies
\prob(\rob(\varphi,\model(\param)) \in [\ell,u]) > 1-\epsilon
\end{align}
\end{enumerate}

Statistical model checking (SMC)
\cite{legay2015statistical,agha2018,zuliani2010bayesian,zareistatistical,NimaPowerTrain,MaheshSMC}
approaches have been used in the past to establish assertions such as
\eqref{eq:probsatcheck}. The most popular SMC methods use statistical
hypothesis testing procedures to check whether the hypothesis that
\eqref{eq:probsatcheck} is true can be accepted with confidence
exceeding user-specified thresholds $\alpha,\beta$ for respectively
committing a type I error (i.e. accepting the hypothesis when it is
not true), or a type II error( i.e. rejecting the hypothesis when it
is true).  SMC methods provide the user with conditions on the number
of simulations required, $\alpha$, $\beta$ and $\epsilon$ in order to
accept or reject the hypothesis~\eqref{eq:probsatcheck}.

\myipara{Approach} To establish assertions such as \eqref{eq:probsatcheck} or \eqref{eq:probrobcheck}, we present an approach based on conformal inference, a technique for giving confidence intervals with
marginal coverage guarantees. A unique feature of our technique is
that it does not make any assumptions on the user-provided
distribution on the parameter space or the dynamics represented by the
model.

We now give an overview of our approach as illustrated in
Fig.~\ref{fig:overview}.  Given the parameterized system model
$\model(\param)$, a distribution $\distribution_\paramspace$ over the
space $\paramspace$ of parameter values, and an STL property
$\varphi$, we perform $N$ simulations model with these parameter
values to obtain trajectories $\traj_\param$.  We then compute the
robust satisfaction value $\rob(\varphi,\traj_\param)$ for each model
trajectory. 

\begin{wrapfigure}{r}{0.5\textwidth}
\vspace{-0.5cm}
\includegraphics[width=.49\textwidth]{./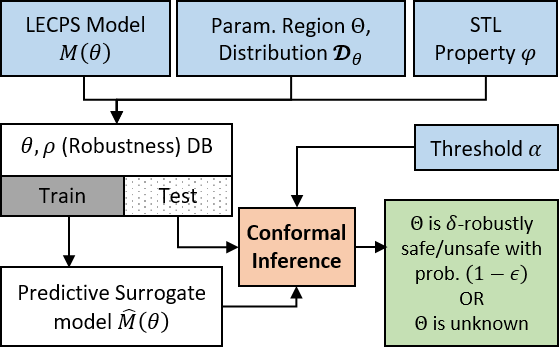}
\caption{Overview of our approach.\label{fig:overview}}
\end{wrapfigure}
Next, we consider some subset of the generated $\theta$
and corresponding $\rob(\varphi,\traj_\param)$ values and train a
surrogate model $\surrogate$ that takes in a parameter value $\param$
and predicts a robustness value $\surrogate(\param)$ for {\em all}
parameter values in the given region $\paramspace$. Finally, we use
the test set and a user-provided threshold $\epsilon$ as inputs to a
combination of the {\em conformal inference} procedure with a global
optimizer to compute a bound $d$ that guarantees that for all $\param'
\in \paramspace$, 
\[
\prob\left(\rob(\varphi,\traj_{\param'}) \in \left[\surrogate(\param') -d,\ \
                           \surrogate(\param')+d\right]\right) \geq 1-\epsilon.  
\]

The above guarantee allows us to construct a confidence interval for
the robustness of the property $\varphi$ over the given region
$\paramspace$.  A strictly positive or strictly negative interval
indicates that the $\paramspace$ is respectively safe or unsafe.
However, if the interval contains $0$, then the status of
$\paramspace$ remains unknown.

The above procedure naturally yields a refinement procedure which
allows us to start with a larger region in the parameter space, and
split it into smaller regions if the region is deemed unknown. In a
smaller region, the accuracy of the surrogate model improves (due to
more data in a smaller region), and hence previously inconclusive
regions can be resolved as safe/unsafe. A na\"{i}ve version of this
splitting algorithm faces the curse of dimensionality -- if the
parameter space is high-dimensional then the branch-and-bound
procedure ends up creating too many branches which can make the
procedure intractable.  In order to address this shortcoming, we
investigate a procedure that uses a Gaussian Process based surrogate
model combined with Bayesian optimization to split the parameter
regions in a smart manner. Gaussian processes explicitly encode sample
uncertainty, which allows us to perform splitting that adapts to the
robustness surface, thereby reducing the number of regions that the
algorithm examines.

Our method can scale to CPS models encoding complex dynamics and large
state spaces, as well as reasonably large parameter spaces. The
results of our method can be used to characterize safe operating
regions in the parameter space, and to build (probabilistic) safety
assurance cases. With respect to analysis times, our method compares
favorably with approaches based on statistical model checking (SMC).
However, unlike SMC and PAC based methods the provided guarantee is
not a function of the number of samples. In fact, our method can
potentially provide the needed level of probabilistic guarantees with
{\em any number of samples}. This is because we build a surrogate
model from samples; if the surrogate model is of poor accuracy due to
a limited number of samples, conformal inference will predict a wider
interval of robustness values {\em with the same level of guarantees}
$1-\alpha$, while for a more accurate model, the robustness interval
will be narrower. Thus, conformal inference allows a trade-off between
sample complexity and the tightness of the guarantee {\em independent}
of the level of the guarantee itself.

The main contributions of this paper are:
\begin{enumerate}
    \item 
        A technique based on surrogate models to approximate the
        robustness of a given specification learned using off-the-shelf
        regression techniques.
    \item
        A new technique for generating prediction intervals for
        the robustness of a specification with user-specified
        probabilistic thresholds.  
    \item
        Algorithms to partition the parameter space of a model
        into safe, unsafe and unknown regions based on conformal
        inference on the surrogate models.
    \item
        Experimental validation on CPS models demonstrating the
        real-world applicability of our methods.
\end{enumerate}

The rest of the paper is organized as follows.
Section~\ref{sec:prelim} provides the background and notation.
Section~\ref{sec:conformal} explains how we use conformal inference
for providing probabilistic guarantees on satisfaction/violation of a
given STL property over a region in the parameter space.
Section~\ref{sec:refinement} presents our algorithm for refining
parameter spaces using Gaussian Processes. Finally,
Section~\ref{sec:casestudies}, illustrates our approach using several
case studies and Section~\ref{sec:relconc} presents our conclusions.

\section{Preliminaries}\label{sec:prelim}

\begin{definition}[Signals, Black-box Models]
We define a signal or a trajectory $\traj$  as a function from a
set $\dom = [0,T]$ for some $T \in \reals^{\ge 0}$ to a compact 
set of values $\states$. The signal value at time $t$ is denoted as 
$\traj(t)$. A parameter space $\paramspace$ is some compact subset 
of $\reals^k$. A model $\model$ is a function that maps a parameter 
value $\param \in \paramspace$ to an output signal $\traj_\param$.
\end{definition}

We note that the above definition permits parameterized {\em input
signals} for the model. We can define such signals using a function
known as a signal generator that maps specific parameter values to
signals. For example, a piecewise linear signal containing $k$ linear
segments can be described using $k+1$ parameters, $k$ corresponding to
the starting point for each segment and $1$ for the end-point of the
final segment.

We  assume that $\param \in \paramspace$ is a random variable that
follows a (truncated) distribution $\distribution_\param$ with
probability density function (PDF) $\pdf(\param)$ and $\forall \param
\notin \paramspace, \pdf(\theta) = 0$.  If we only wish to draw
samples from a subset $S \subseteq \paramspace$ (by dropping samples
from $\paramspace\setminus S$), the corresponding distribution of the
samples is denoted by $\distribution_\param \restr S$ and follows the
PDF shown below.
\begin{equation}
\label{eq:pdfrestr}
  \pdf^\prime(\param) = \left\{
\begin{array}{ll}
     \frac{\pdf(\param)}{\int_{\tau \in S}\pdf(\tau)d\tau}& \mbox{if~} \param \in S \\
     0 &  \mbox{otherwise}.
\end{array}
\right.  
\end{equation}
Instead of closed form descriptions of the generator for
$\traj_\param$ (e.g. differential or difference equations),
we assume that there is a {\em simulator} that can generate signals
compatible with the semantics of the model $\model$. 
	
\begin{definition}
\label{def:sims}
A {simulator\/} for a (deterministic) set $\trajset$ of trajectories 
is a function (or a program) $\simulator$ that takes as input a 
parameter $\param \in \paramspace$, and a finite sequence of time points 
$t_0, \ldots, t_k$, and returns the signal
($t_0,\simulator(\param,t_0)$, $\ldots$, $t_k,\simulator(\param,t_k)$),
where for each $i\in \{0,\ldots, k\}$, $\simulator(\param,t_i) = 
\traj_\param(t_i)$.
\end{definition}
In rest of the paper, unless otherwise specified, we ignore the
distinction between the signals $\simulator(\param,\cdot)$ and
$\traj_\param$.

%


\subsection{Signal Temporal Logic}
\label{sec:STL}

Signal Temporal Logic \cite{maler2004} is a popular formalism that has
been used widely used to express safety specifications for many CPS
applications. STL formulas are defined over signal predicates of the
form $f(\traj) \geq c$ or $f(\traj) \leq c$, where $\traj$ is a signal
and $f:\reals^n \rightarrow \reals$ is a real-valued function and $c
\in \reals$. STL formulas are written using the grammar shown in
Eq.~\eqref{eq:stlsyntax}. Here, we assume that $I=[a,b]$, where
$a,b\in\reals^{\ge 0}, a \leq b$, and $\sim \in \{\leq,\geq\}$.
\begin{equation}
\label{eq:stlsyntax}
\varphi,\psi:= \mathit{true} \mid
               f(\traj) \sim c \mid
               \neg \varphi \mid
               \varphi \wedge \psi \mid
               \varphi \vee \psi \mid
               \eventually_I \varphi \mid
               \always_I \varphi \mid
               \varphi~\until_I~\psi
\end{equation}
In the above syntax, $\eventually$ (eventually), $\always$ (always),
and $\until$ (until) are temporal operators.  Given $t \in \nnreals$
and $I = [a,b]$, we use $t + I$ to denote $[t+a, t+b]$.  Given a
signal $\traj$ and a time $t$, we use $(\traj, t) \models \varphi$ to
denote that $\traj$ satisfies $\varphi$ at time $t$, and $\traj
\models \varphi$ as short-hand for $(\traj, 0) \models \varphi$. The
Boolean satisfaction semantics of an STL formula can be recursively in
terms of the satisfaction of its subformulas over the a signal. For
$\sim \in \{\leq,\geq\}$, $\traj$: $(\traj,t) \models f(\traj)\sim c$
if $f(\traj(t)) \sim c$ is true. The semantics of the Boolean
operators for negation ($\neg$), conjunction ($\wedge$) and
disjunction ($\vee$) can be obtained in the usual fashion by applying
the operator to the Boolean satisfaction of its operand(s). The value
of $(\traj,t) \models \eventually_I\varphi$ is true iff $\exists t'\in
t+I$ s.t. $(\traj,t') \models \varphi$, while $(\traj,t)\models
\always_I\varphi$ iff $\forall t'\in t+I$, $(\traj,t') \models
\varphi$. The formula $\varphi\until_I\psi$ is satisfied at time $t$
if there exists a time $t'$ s.t. $\psi$ is true, and for all $t'' \in
[t,t')$, $\varphi$ is true.


STL is also equipped with quantitative semantics that define the {\em
robust  satisfaction value} or {\em robustness} -- a function mapping
a formula $\varphi$ and the signal $\traj$ to a real number
\cite{fainekos_robustness_2009,donze_robust_2010}. Informally,
robustness can be viewed as a degree of satisfaction of an STL formula
$\varphi$. While many competing definitions for robust satisfaction
value
exist~\cite{akazaki_time_2015,rodionova_temporal_2016,jaksic_quantitative_2018},
we use the original definitions~\cite{donze_robust_2010} in this
paper.  
\begin{definition}
\label{def:RSV}
The robustness value is a function $\rob$  mapping $\varphi$,
the trajectory $\traj$, and a time $t \in \traj.\dom$ as follows:
\[
\small
\begin{array}{rcl}
    \rob(f(\traj)\ge c, \traj, t) & = & f(\traj(t)) - c  \\
    \rob(\neg \varphi, \traj, t) & = & -\rob(\varphi, \traj, t) \\ 
    \rob(\varphi \wedge \psi, \traj, t) &=& \min( \rob(\varphi, \traj, t), \rob(\psi, \traj, t)) \\
    \rob(\varphi ~\until_{I}~ \psi) &=& \sup\limits_{t_1 \in t+I} \min 
    (\rob (\psi, \traj, t_1), \inf\limits_{t_2 \in [t,t_1)} \rob(\varphi, \traj, t_2) ) \\
\end{array}
\]
\end{definition}
The robustness values for other Boolean and temporal operators can be
derived from the above definition; for example, $\always_I \varphi$ and
$\eventually_I \varphi$ are a special case of the semantics for until
($\until_I$) respectively evaluating to the minimum and maximum of the
robustness of $\varphi$ over the interval $I$.

\begin{example} 
\label{example:one}
Consider the time-reversed van Der Pol oscillator specified as
$\dot{x_1} = -x_2$, $\dot{x_2} = 4(x_1^2-1)x_2 + x_1$.
Figure~\ref{figure:STLandTraj} illustrates the satisfaction (indicated
in blue) and violation (indicated in red) of two example
specifications by $x_1(t)$: (a) $\varphi_1$ specifies that for any
time $t \in [0, 10]$, the value of the trajectory $x(t)$ should be
less than $0.5$ and (b) $\varphi_2$ specifies that from some time
within the first $2$ time units, $x(t)$ settles in the region
$[-0.3,0.3]$ for $8$ time units.
\end{example}

\begin{figure}[t]
\centering
\subfloat[$\varphi_1  = \always_{[0,10]} (x(t) < 0.5)$\label{fig_phi_1}]{%
\includegraphics[width=0.49\textwidth,trim={0.2cm 0 9cm 0},clip]{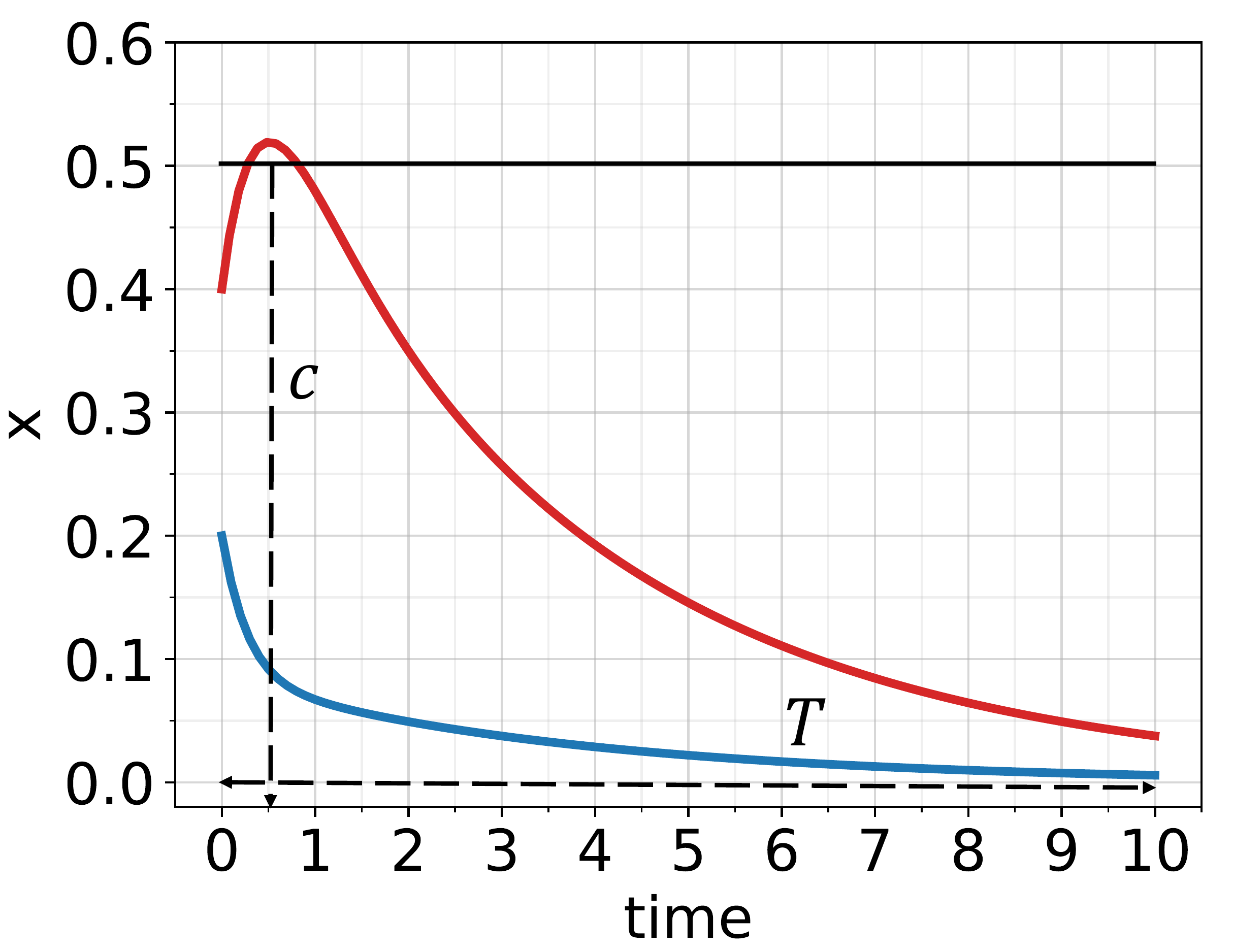}
}%
\subfloat[$\varphi_2 = \eventually_{[0,2]} \always_{[0,8]} (\| x(t)\| < 0.3)$%
\label{fig_phi_2}]{%
\includegraphics[width=0.49\textwidth,
                 trim={0.2cm 0 9cm 0},clip]{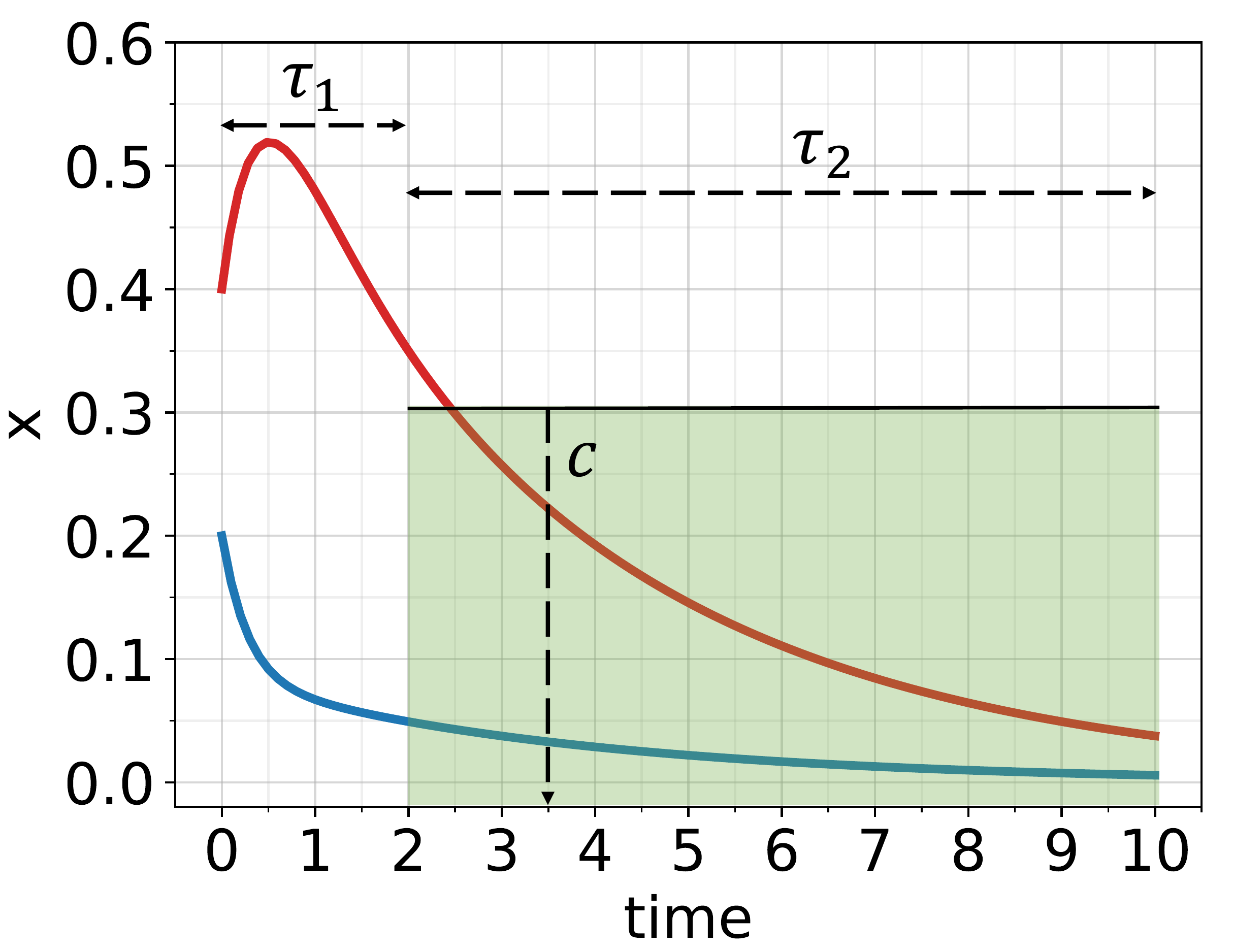}}
\caption{Example of trajectories satisfying (blue) and violating (red) the 
         STL formulas.\label{figure:STLandTraj}}
\end{figure}



\subsection{Learning Surrogate Models}\label{sec:learning}
In this section we discuss learning of surrogate models for a given
black-box model $\model$. A surrogate model is essentially a {\em
quantitative abstraction} of the original black-box model.
Quantitative abstractions have been explored
in the theory of weighted transition systems (WTS)
\cite{cerny2013quantitative}. A WTS is a transition system where every
transition is associated with weights, and a quantitative property of
the WTS maps sequences of states of the WTS to a real number computed
using some arithmetic operations on the weights. Quantitative
abstractions focus on sound proofs for quantitative properties. We
observe that we can view the robustness of an STL property as a
quantitative property evaluated on the system trajectory. We introduce
two new notions of quantitative abstractions defined on the
trajectories of a system.

\begin{definition}[$\delta$-surrogate model]
Let $\traj_\param$ be the trajectory obtained by simulating the
$\model$ with the parameter $\param$, where $\param \in \paramspace$.
Let $\gamma$ be a {\em quantitative property} on $\traj_\param$, i.e.
$\gamma$ maps $\traj_\param$ to a real number. We say that a model
$\surrogate$ that maps $\param$ to a real number is an
$\delta$-distance-preserving quantitative abstraction or an
$\delta$-surrogate model of $\model$ and $\gamma$ if 
\begin{align}
\label{eq:epsdistpresabs}
\exists \delta: \forall \param \in \paramspace: |\gamma(\traj_\param) - \surrogate(\param)| \le \delta
\end{align}
\end{definition}

Essentially, the $\delta$-surrogate model guarantees that the value
of the quantitative property $\gamma$ evaluated on $\traj_\param$
(obtained from the original model $\model$) is no more than $\delta$
away from the value that it predicts. The idea is that the
$\delta$-surrogate model could be systematically derived from the
original model, and could be significantly simpler than the original
model making it amenable to formal analysis.  For example, if we have
an $\delta$-surrogate model, then we can prove that a given property
holds by systematically sampling the parameter space $\paramspace$.

In general, such models could be hard to obtain; hence, we propose a
probabilistic relaxation known as the $(\delta,\epsilon)$-{\em probabilistic
surrogate} model, where condition~\eqref{eq:epsdistpresabs}
is replaced by \eqref{eq:probespdistpresabs}.
\begin{definition}[$(\delta,\epsilon)$-probabilistic surrogate model]
Given a model $\model$, a quantitative property $\gamma$, and a
user-specified bound $\epsilon \in [0,1)$, we say that
$\surrogate$ is a $(\delta,\epsilon)$-probabilistic surrogate model if:
\begin{align}
\label{eq:probespdistpresabs}
\exists \delta \in \reals, \epsilon \in [0,1):
\prob\left(|\gamma(\traj_\param) - \surrogate(\param)| \le \delta \mid \param \sim \distribution_\param \right) \ge 1-\epsilon
\end{align}
\end{definition}

We now explain how we can obtain $(\delta,\epsilon)$-probabilistic
surrogate models for an arbitrary quantitative property $\gamma$. The
basic idea is to use statistical learning techniques: we sample
$\paramspace$ in accordance with the distribution
$\distribution_\param$ to obtain a finite set of parameter values
$\sampleParams$. For each $\param_i \in \sampleParams$, we simulate
the model to obtain $\traj_{\param_i}$ and compute
$\gamma(\traj_{\param_i})$. We then compute the surrogate model
$\surrogate$ using parametric regression models (e.g. linear,
polynomial functions) or nonparametric regression methods (e.g. neural
networks and Gaussian
Processes)~\cite{friedman2001elements,barron1993universal,gpbook}.  We
now briefly review some of these regression methods, and in
Section~\ref{sec:conformal} explain how we can obtain $\delta$ values
for a user-provided bound $\epsilon$.

\myipara{Polynomial Regression}
Polynomial regression assumes a polynomial relationship between independent
variables $X$ and the dependent variable $Y$. It aims to fit a polynomial curve
to the input and output data in a way that minimizes a suitable loss function.
A commonly used loss function is the least square error (or the sum of squares
of residuals). Typically, a polynomial regression requires the user to specify
the degree of the polynomial to use. Polynomial regression generally has high
tolerance to the function's curvature level, but has high sensitivity to the
outliers. In our experiments, we restrict the polynomial degree to 2.

\myipara{Neural Network Regression} Neural networks
\cite{bishop2006pattern} offer a high degree of flexibility for
regressing arbitrary nonlinear functions. While there are many
different NN architectures, we use a simple multi-layer perceptron
model with a stochastic gradient-based optimizer. This model simply
updates its parameters based on iterative steps along the partial
derivatives of the loss function. 

\myipara{Gaussian Process based Regression Model \cite{gpbook}} A
Gaussian Process (GP) is a stochastic process, i.e., it is a
collection of random variables $W_\theta$ indexed by $\theta$, where
$\theta$ ranges over some discrete or dense set. The key property of
GP is that any finite sub-collection of these random variables has a
multi-variate Gaussian distribution. GP models are popular as
non-parametric regression methods used for approximating arbitrary
continuous functions with the appropriate {\em kernel functions}. A GP
can be used to express a prior distribution on the space of functions,
e.g. from a domain $\reals^n$ to $\reals$. Let $F:\reals^n\to \reals$
be a random function. Then, we say that $F$ is a centered Gaussian
process with kernel $k$, if for every $(x_1,\ldots,x_n) \in \reals^n$,
there exists a positive semi-definite matrix $\Sigma$ such that
$[F(x_1),\ldots,F(x_n)]\sim (0,\Sigma)$. The $(i,j)^{th}$ entry of
$\Sigma$, i.e. $\Sigma_{ij}=k(x_i,x_j)$ for some kernel function $k$.
The matrix $\Sigma$ is called the covariance matrix, and the function
$k$ measures the joint variability of $x_i$ and $x_j$. There are
several kernel functions that are popular in literature: the squared
exponential kernel, the 5/2 Mat\'{e}rn kernel, etc. In our
experiments, we use a {\em sum} kernel function that is the addition
of a dot product kernel and a white noise kernel (explained in
Section~\ref{sec:refinement}).


\section{Conformal Inference}\label{sec:conformal}
Conformal inference~\cite{lei2018distribution,lei2014distribution} is
a framework to quantify the accuracy of predictions in a regression
framework~\cite{vovk2005algorithmic}. It can provide guarantees using
a finite number of samples, without making assumptions on the
distribution of data used for regression or the technique used for
regression. We explain the basic idea of conformal inference, and then
explain how we adapt it to our problem setting.

\subsection{Conformal Inference Recap}
Consider i.i.d. regression data $Z_1,\cdots,Z_m$ drawn from an
arbitrary joint $\distribution_{XY}$, where each $Z_i = (X_i,Y_i)$ is
a random variable in $\reals^n \times \reals$, consisting of
$n$-dimensional feature vectors $X_i$ and a response variable $Y_i$.
Suppose we fit a surrogate model to the data, and we now wish to use
this model to predict a new response $Y_{m+1}$ for a new feature value
$X_{m+1}$, with no assumptions on $\distribution_{XY}$. Formally,
given a positive value $\alpha \in (0,1)$, conformal inference
constructs a prediction band $C \subseteq \reals^n \times \reals$
based on $Z_1,\cdots,Z_n$ with
property~\eqref{eq:finite-sample-guarantee}.
\begin{equation}
\label{eq:finite-sample-guarantee}
    \prob(Y_{m+1} \in C(X_{m+1})) \geq 1-\alpha.
\end{equation}
\noindent Here, the probability is over $m+1$ i.i.d. draws
$Z_1,\cdots,Z_{m+1} \sim \distribution_{XY}$, and for a point $x \in
\reals^n$ we denote $C(x) = \{y \in \reals: (x, y)\in C\}$. The
parameter $\alpha$ is called the {\em miscoverage level} and
$1-\alpha$ is called the {\em probability threshold}. Let 
\[
\mu(x) = \expectedvalue(Y \ | \ X = x), x \in \reals^n
\]
denote the regression function, where $\expectedvalue(W)$ denotes the
expected value of the random variable $W$. The regression problem is
to estimate such a conditional mean of the test response $Y_{m+1}$
given the test feature $X_{m+1} = x$.  Common regression methods use a
regression model $g(x,\regpara)$ and minimize the sum of squared
residuals of such model on the $m$ training regression data
$Z_1,\cdots,Z_m$, where $\regpara$ are the parameters of the
regression model.  An estimator for $\mu$ is given by $\hat{\mu}(x) =
g(x,\hat{\regpara})$, where 
\[ 
\hat{\regpara} = \displaystyle\argmin_{\regpara} 
                  \frac{1}{m} \sum_{i=1}^{m} 
                    \left(Y_{i} - g(X_i,\regpara)\right)^2 + 
                    \Reg(\regpara)\]
and $\Reg(\regpara)$ is a regularizer.  In \cite{lei2018distribution},
the authors provide a technique called {\em split conformal
prediction} that we use to construct prediction intervals that satisfy
the finite-sample guarantees as in
Equation~\eqref{eq:finite-sample-guarantee}. The procedure is
described in Algorithm~\ref{alg:conformal} as a function $\confint$
which takes as input the i.i.d. training data $\{(X_i,
Y_i)\}_{i=1}^m$, miscoverage level $\alpha$ and any regression
algorithm $\regalg$. Algorithm~\ref{alg:conformal} begins by splitting
the training data into two equal-sized disjoint subsets. Then a
regression estimator $\hat{\mu}$ is fit to the training set $\{(X_i,
Y_i)\}: i \in \Idx_1)$ using the regression algorithm $\regalg$
(Line~\ref{alg1:fit}). Then the algorithm computes the absolute
residuals $R_i$s on the test set $\{(X_i, Y_i)\}: i \in \Idx_2)$
(Line~\ref{alg2:calibrate}).  For the desired probability threshold
$\alpha \in [0,1)$, the algorithm sorts the residuals in ascending
order $\{R_i: i\in \Idx_2\}$ and finds the residual at the position
given by the expression: $\left\lceil (n/2+1)(1-\alpha) \right\rceil$.
This residual is used as the confidence range $d$. In
\cite{lei2018distribution}, the authors prove that the prediction
interval at a new point $X_{m+1}$ is given by such $\hat{\mu}$ and $d$
that Theorem~\ref{thm:conformal} is valid.

\begin{algorithm}[tbhp!]
\caption{Conformal regression algorithm $\confint(\{(X_i, Y_i)\}_{i=1}^m, \alpha, \regalg)$}
\label{alg:conformal}
\SetKwInOut{Input}{input}
\SetKwInOut{Output}{output}
\Input{Data $\{(X_i, Y_i)\}_{i=1}^m$, miscoverage level $\alpha$, regression algorithm $\regalg$}
\Output{Regression estimator $\hat{\mu}$, confidence range $d$}
Randomly split $\{1,\cdots,n\}$ into two equal-sized subsets $\Idx_1, \Idx_2$\;
$\hat{\mu} = \regalg((X_i, Y_i): i \in \Idx_1)$ \label{alg1:fit} \ \;
$R_i = |Y_i - \hat{\mu}(X_i)|, i \in \Idx_2$ \label{alg2:calibrate}\;
$d = $ the $k$th smallest value in $\{R_i: i\in \Idx_2\}$, where $k=\lceil (n/2+1)(1-\alpha) \rceil$ \label{alg1:residue}\;
\Return $\hat{\mu}, d$
\end{algorithm}

\begin{theorem}[Theorem 2.1 in~\cite{lei2018distribution}]
\label{thm:conformal}
If $(X_i,Y_i),i=1,\cdots,m$ are i.i.d., then for an new i.i.d. draw $(X_{m+1}, Y_{m+1})$, using $\hat{\mu}$ and $d$ constructed in Algorithm~\ref{alg:conformal}, we have that
$
\prob(Y_{m+1} \in [\hat{\mu}(X_{m+1}) -d, \hat{\mu}(X_{m+1}) +d ] ) \geq 1 - \alpha.
$
Moreover, if we additionally assume that the residuals $\{R_i: i\in \Idx_0\}$ have a
continuous joint distribution, then
$
\prob(Y_{m+1} \in [\hat{\mu}(X_{m+1}) -d, \hat{\mu}(X_{m+1}) +d ] ) \leq 1 - \alpha +\frac{2}{m+2}.$
 \hfill $\square$
\end{theorem}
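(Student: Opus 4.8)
The plan is to exploit the exchangeability of the calibration residuals together with the test residual, after conditioning on the first split $\Idx_1$. First I would condition on the training data indexed by $\Idx_1$, which fixes the fitted regressor $\hat{\mu}$. Given this conditioning, the calibration residuals $R_i = |Y_i - \hat{\mu}(X_i)|$ for $i \in \Idx_2$, together with the new residual $R_{m+1} = |Y_{m+1} - \hat{\mu}(X_{m+1})|$, are deterministic functions of the i.i.d. pairs $(X_i,Y_i)$ and are therefore themselves i.i.d., hence exchangeable. The key reformulation is that the coverage event $Y_{m+1} \in [\hat{\mu}(X_{m+1}) - d,\ \hat{\mu}(X_{m+1}) + d]$ holds exactly when $R_{m+1} \le d$, and by construction $d$ is the $k$-th smallest of the $n/2$ calibration residuals with $k = \lceil (n/2+1)(1-\alpha) \rceil$. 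Since $d$ is a function of the calibration residuals and not a fixed constant, I would phrase the coverage event purely through the rank of $R_{m+1}$ in the pooled collection, rather than treating $d$ as deterministic.

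For the lower bound I would observe that $\{R_{m+1} \le d\}$ is precisely the event that $R_{m+1}$ is among the $k$ smallest of the $n/2+1$ pooled residuals $\{R_i : i \in \Idx_2\} \cup \{R_{m+1}\}$, since $d$ is the $k$-th order statistic of the calibration set. By exchangeability of the pooled residuals, the probability that the designated one (namely $R_{m+1}$) lands among the $k$ smallest is at least $k/(n/2+1)$, and this bound is robust to ties. Plugging in $k \ge (n/2+1)(1-\alpha)$ yields a conditional coverage of at least $1-\alpha$; because this holds for every realization of $\Idx_1$, the tower property removes the conditioning and gives the unconditional guarantee $\prob(Y_{m+1} \in [\hat{\mu}(X_{m+1}) - d, \hat{\mu}(X_{m+1}) + d]) \ge 1 - \alpha$.

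For the upper bound, the additional hypothesis that the residuals have a continuous joint distribution is what I would use to rule out ties almost surely, so that the rank of $R_{m+1}$ among the pooled residuals is \emph{exactly} uniform on $\{1,\ldots,n/2+1\}$ rather than merely sub-uniform. Then the conditional coverage probability equals $k/(n/2+1)$, and applying the elementary inequality $\lceil x \rceil < x + 1$ with $x = (n/2+1)(1-\alpha)$ gives $k/(n/2+1) < (1-\alpha) + 1/(n/2+1)$. Recognizing that $n = m$ and that $1/(n/2+1) = 2/(n+2) = 2/(m+2)$ delivers the claimed upper bound after integrating over $\Idx_1$.

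I expect the main obstacle to be the careful bookkeeping around ties and the direction of the rank inequality: the lower bound must survive arbitrary ties, so it relies only on a sub-uniformity statement, whereas the upper bound fails without the continuity assumption precisely because ties would place extra mass on the coverage event. Getting the equivalence between $\{R_{m+1} \le d\}$ and the rank condition exactly right — including the off-by-one in the definition of $k$ and the extra $+1$ arising from the test point joining the calibration pool — is the delicate part; once that is settled, the exchangeability argument and the arithmetic with the ceiling function are routine.
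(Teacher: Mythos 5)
Your proof is correct, and it is essentially the canonical argument: note that the paper itself gives no proof of this theorem --- it is imported verbatim (as Theorem 2.1) from the cited reference of Lei et al., whose proof proceeds exactly as you describe. Conditioning on the fitting half $\Idx_1$ to freeze $\hat{\mu}$, reducing coverage to the rank of the test residual among the pooled calibration residuals, using sub-uniformity of that rank for the tie-robust lower bound and exact uniformity under the continuity assumption (plus $\lceil x\rceil < x+1$ and $1/(m/2+1) = 2/(m+2)$) for the upper bound, and then integrating out $\Idx_1$, is precisely the standard split-conformal argument this theorem rests on; your bookkeeping of the rank equivalence and the tie cases is accurate.
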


Generally speaking, as we improve our surrogate model $\hat{\mu}$ of
the underlying regression function $\mu$, the resulting conformal
prediction interval decreases in length. Intuitively, this happens
because a more accurate $\hat{\mu}$ leads to smaller residuals (or
$\epsilon$ in Section~\ref{sec:learning}), and conformal intervals are
essentially defined by the quantiles of the (augmented) residual
distribution.  Note that Theorem~\ref{thm:conformal} asserts marginal
coverage guarantees, which should be distinguished with the
conditional coverage guarantee $\prob(Y_{m+1} \in C(x) \ | \ X_{m+1} =
x) \geq 1-\alpha$ for all $x \in \reals^n$. The latter one is a much
stronger property and hard to be achieved without assumptions on
$\distribution_{XY}$.

\subsection{Computing $(\delta,\epsilon)$-probabilistic surrogate models}
\label{sec:para-to-rsv}



We assume that the parameter value $\param$ and $\rob(\varphi,
\traj_{\param})$ follow a joint (unknown) distribution $\jointdist$
that we wish to empirically estimate.  As indicated in
Section~\ref{sec:learning}, the first step to learning a
$(\delta,\epsilon)$-probabilistic surrogate model is based on sampling
$\jointdist$ and applying regression methods.  We draw $m$ i.i.d
samples $\sampleParams = \{\param_1,\cdots,\param_m\}$ from
$\distribution_\param$ and compute the robustness values $\rob_i =
\rob(\varphi,\traj_{\param_i})$ for each model trajectory
corresponding to the parameter $\param_i$.
Lemma~\ref{lemma:deltaepssurrogate} follows from
Theorem~\ref{thm:conformal}.
\begin{lemma}
\label{lemma:deltaepssurrogate}
Let $(\surrogate,d) = \confint(\{\param_i,\rob_i\}, \epsilon,
\regalg)$, where $\confint$ is as defined in
Algorithm~\ref{alg:conformal}, $1-\epsilon$ is a user-provided
probability threshold, $\regalg$ is some regression algorithm, and
$d \in \reals$, then $\surrogate$ is a $(d,\epsilon)$-probabilistic
surrogate model.
\end{lemma}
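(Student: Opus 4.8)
The plan is to derive the lemma as an essentially immediate corollary of Theorem~\ref{thm:conformal}, once the regression setting of that theorem is matched to our parameter-to-robustness setting. Concretely, I would instantiate the i.i.d. regression data $(X_i,Y_i)$ of Theorem~\ref{thm:conformal} by taking the feature $X_i := \param_i$ and the response $Y_i := \rob_i = \rob(\varphi,\traj_{\param_i})$, and I would take the quantitative property appearing in the definition of the $(\delta,\epsilon)$-probabilistic surrogate model to be $\gamma(\traj_\param) = \rob(\varphi,\traj_\param)$. Under this correspondence the estimator $\hat\mu$ and range $d$ returned by $\confint$ are exactly the $\surrogate$ and $d$ of the lemma, and the miscoverage level is $\alpha = \epsilon$.

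First I would verify that the hypotheses of Theorem~\ref{thm:conformal} hold, namely that the pairs $\{(\param_i,\rob_i)\}_{i=1}^m$ are i.i.d. The parameters $\param_1,\dots,\param_m$ are drawn i.i.d. from $\distribution_\param$ by construction. Because the model $\model$ is deterministic (Definition~\ref{def:sims}), the trajectory $\traj_{\param_i}$, and hence $\rob_i = \rob(\varphi,\traj_{\param_i})$, is a fixed measurable function of $\param_i$ alone. Each pair $(\param_i,\rob_i)$ is therefore the pushforward of $\param_i$ through a single map, so the pairs are i.i.d. draws from the joint distribution $\jointdist$; the same holds for a fresh draw $(\param_{m+1},\rob_{m+1})$ with $\param_{m+1}\sim\distribution_\param$.

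Next I would simply invoke Theorem~\ref{thm:conformal} under the identification above to obtain, for the new draw,
\[
\prob\bigl(\rob_{m+1}\in[\,\surrogate(\param_{m+1})-d,\ \surrogate(\param_{m+1})+d\,]\bigr)\ \ge\ 1-\epsilon.
\]
Rewriting the interval membership as the two-sided inequality $|\rob_{m+1}-\surrogate(\param_{m+1})|\le d$ and substituting $\rob_{m+1}=\gamma(\traj_{\param_{m+1}})$ with $\param_{m+1}\sim\distribution_\param$ yields exactly condition~\eqref{eq:probespdistpresabs} with $\delta=d$, establishing that $\surrogate$ is a $(d,\epsilon)$-probabilistic surrogate model.

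The only step that is not purely mechanical is the i.i.d. justification: Theorem~\ref{thm:conformal} is phrased for an arbitrary joint distribution $\distribution_{XY}$ over feature-response pairs, whereas here the response is a deterministic function of the feature. I expect this to be the main point to state carefully, though it is benign --- degeneracy of the conditional law of $Y$ given $X$ does not affect the i.i.d. requirement, which concerns the joint pairs $(\param_i,\rob_i)$ and is guaranteed by the pushforward argument. I would also note that the continuity clause of Theorem~\ref{thm:conformal} giving the upper bound is not needed, since the lemma asserts only the lower coverage bound.
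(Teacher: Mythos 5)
Your proposal is correct and follows essentially the same route as the paper, which proves the lemma simply by asserting that it ``follows from Theorem~\ref{thm:conformal}'' --- i.e., by instantiating the split conformal guarantee with features $\param_i$, responses $\rob_i$, and miscoverage level $\epsilon$, exactly as you do. Your additional care about the i.i.d.\ hypothesis (the pushforward argument handling the degenerate, deterministic dependence of $\rob_i$ on $\param_i$) is a sound elaboration of a step the paper leaves implicit, and it matches the paper's framing of $(\param,\rob(\varphi,\traj_\param))$ as i.i.d.\ draws from the joint distribution $\jointdist$.
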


We now show how we can use $(\delta,\epsilon)$-probabilistic surrogate
models to perform statistical verification.
Theorem~\ref{thm:predict_interval} gives shows that the confidence
range returned by the conformal inference procedure can be extended
over the entire parameter space.
\begin{theorem}
\label{thm:predict_interval}
Let 
\begin{enumerate}
\item
$(\param_i,\rob_i),i=1,\cdots,m$ be i.i.d. samples drawn from the
joint distribution $\jointdist$ of $\param \in \paramspace$ and
$\rob(\varphi,\traj_\param)$,
\item
$\regalg$ be a regression algorithm,
\item
$1-\epsilon$ be a user-provided probability threshold,
\item
$(\surrogate,d) = \confint(\{\param_i,\rob_i\}, \epsilon, \regalg)$,
i.e. $\surrogate$ is the surrogate model and $d$ is the confidence range
returned by Algorithm~\ref{alg:conformal},
\item
$\vstarmax = \max_{\param \in \paramspace}
\surrogate(\param)$, and, $\vstarmin = \min_{\param \in \paramspace}$.
\end{enumerate}
Then, 
\begin{equation}
\prob\left(\rob(\varphi,\traj_\param) \in [\vstarmin - d, \vstarmax + d]\mid \param \sim \distribution_{\param} \right) \geq 1-\epsilon
\end{equation}
\end{theorem}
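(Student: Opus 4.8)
The plan is to reduce this global statement to the per-point marginal coverage guarantee already in hand for the surrogate model, and then to enlarge the data-dependent prediction band to a single fixed interval using only the definitions of $\vstarmin$ and $\vstarmax$. The whole argument is a set-containment trick layered on top of conformal inference.

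First I would invoke Lemma~\ref{lemma:deltaepssurrogate} (equivalently, Theorem~\ref{thm:conformal} applied with the identification $X_i \mapsto \param_i$, $Y_i \mapsto \rob_i$). Since the $(\param_i,\rob_i)$ are i.i.d.\ draws from $\jointdist$ and a fresh test draw $(\param,\rob(\varphi,\traj_\param))$ is i.i.d.\ with them, the conformal guarantee says exactly that $\surrogate$ is a $(d,\epsilon)$-probabilistic surrogate model, i.e.
\[
\prob\left(\rob(\varphi,\traj_\param) \in [\surrogate(\param)-d,\ \surrogate(\param)+d] \mid \param \sim \distribution_\param\right) \geq 1-\epsilon .
\]
Here I would condition on the training sample, so that the pair $(\surrogate,d)$ returned by $\confint$ is fixed and deterministic and the only remaining randomness is the test draw $\param \sim \distribution_\param$; this makes $\vstarmin$ and $\vstarmax$ fixed constants for the rest of the argument.

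The key geometric step is then immediate: by the definitions $\vstarmin = \min_{\param\in\paramspace}\surrogate(\param)$ and $\vstarmax = \max_{\param\in\paramspace}\surrogate(\param)$, we have $\vstarmin \leq \surrogate(\param) \leq \vstarmax$ for every $\param\in\paramspace$. Hence $\surrogate(\param)-d \geq \vstarmin-d$ and $\surrogate(\param)+d \leq \vstarmax+d$, so the per-point band is contained in the fixed interval,
\[
[\surrogate(\param)-d,\ \surrogate(\param)+d] \subseteq [\vstarmin-d,\ \vstarmax+d] \quad\text{for all } \param\in\paramspace .
\]
Consequently the event $\{\rob(\varphi,\traj_\param)\in[\surrogate(\param)-d,\surrogate(\param)+d]\}$ implies $\{\rob(\varphi,\traj_\param)\in[\vstarmin-d,\vstarmax+d]\}$ pointwise, and by monotonicity of probability the latter has probability at least that of the former, which is $\geq 1-\epsilon$. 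This closes the proof.

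I do not expect a serious obstacle, since the heavy lifting is done by Theorem~\ref{thm:conformal}. The only points that need care are (i) confirming $\vstarmin,\vstarmax$ are well defined, which requires $\surrogate$ continuous and $\paramspace$ compact — the latter is assumed in the preliminaries and the former holds for the polynomial, neural-network, and GP regressors used here; and (ii) being precise about what the probability is over, ensuring the containment holds for \emph{every} $\param$ so that the implication is valid after marginalizing over $\param\sim\distribution_\param$. The trade-off worth flagging is that the fixed interval $[\vstarmin-d,\vstarmax+d]$ is generally looser than the per-point band, since it is widened to span the full range of $\surrogate$ over $\paramspace$; this is the price of converting a marginal, input-dependent guarantee into one uniform interval valid over the whole region.
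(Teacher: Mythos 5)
Your proof takes essentially the same route as the paper's: invoke the split-conformal guarantee (Theorem~\ref{thm:conformal}) for a fresh i.i.d.\ draw, then use $\vstarmin \le \surrogate(\param) \le \vstarmax$ to enlarge the per-point band into the fixed interval, and conclude by monotonicity of probability. The core argument is correct.

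One step deserves correction, however: your move of conditioning on the training sample so that $(\surrogate,d)$ and hence $\vstarmin,\vstarmax$ become ``fixed constants'' is not licensed by conformal inference. Theorem~\ref{thm:conformal} provides only a \emph{marginal} guarantee, where the probability is taken jointly over all $m+1$ i.i.d.\ draws; conditional on a particular training set, the coverage is a random quantity that can fall below $1-\epsilon$ (the paper itself emphasizes this marginality when justifying the restriction $\jointdist \restr S$ in the partitioning algorithm). Fortunately the repair is costless and leaves your argument intact: the containment $[\surrogate(\param)-d,\ \surrogate(\param)+d] \subseteq [\vstarmin-d,\ \vstarmax+d]$ holds pointwise for \emph{every} realization of the training data and every test point, because $\vstarmin$ and $\vstarmax$ are by definition the extrema of the same realized $\surrogate$. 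So the event implication holds on the whole joint probability space, and monotonicity applies directly to the marginal probability --- no conditioning needed. The conclusion is then read, as in the paper, with $\vstarmin,\vstarmax$ random through $\surrogate$ and the probability marginal over both training and test draws.
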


\begin{proof}
From Theorem~\ref{thm:conformal}, we know that {\em any new} i.i.d.
sample $(\param',\rob(\varphi,\traj_{\param'}))$ from $\jointdist$
satisfies:
\begin{equation}\label{eq:confguarantee} 
\prob(\rob(\varphi,\traj_{\param'}) \in [\surrogate(\param')-d, \surrogate(\param')+d]) \geq 1-\epsilon.  
\end{equation}
By definition, $\vstarmin \le \surrogate(\param') \le \vstarmax$.
Combining this with Eq.~\eqref{eq:confguarantee}, we get the desired
result.
\end{proof}

Theorem~\ref{thm:conformal} requires us to obtain the minimum/maximum
values of the surrogate model over a given region in the parameter
space.  If $\surrogate(\param)$ is a non-convex function and the
chosen optimization algorithm cannot compute the perfect optimal value
$\vstarmin$ or $\vstarmax$, but can only give conservative estimates
of the optimal value, we can update the predicted interval in
Theorem~\ref{thm:predict_interval} as follows.
\begin{corollary}
\label{co:adjusted_predict_interval}
Let $\vmin$ and $\vmax$ be respectively under- and over-approximations
of $\vstarmin$ and $\vstarmax$, then 
\begin{equation}
\label{eq:adjusted_constant_interval}
\prob\left(\rob(\varphi,\traj_\param) \in [\vmin-d, \vmin+d]\mid \param \sim \distribution_{\param} \right) \geq 1-\epsilon
\end{equation}
\end{corollary}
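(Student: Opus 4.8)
The plan is to mirror the proof of Theorem~\ref{thm:predict_interval}, replacing the exact optima $\vstarmin,\vstarmax$ by their conservative approximations, and then to observe that enlarging the prediction interval can only increase the probability of coverage. (I read the target interval as $[\vmin-d,\ \vmax+d]$; as written in Eq.~\eqref{eq:adjusted_constant_interval} the upper endpoint $\vmin+d$ appears to be a typographical slip for $\vmax+d$, since with only one-sided approximations of the two optima the narrower interval cannot be justified.)

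First I would invoke Eq.~\eqref{eq:confguarantee}, which is exactly the per-sample conformal guarantee supplied by Theorem~\ref{thm:conformal}: for a fresh i.i.d. draw $(\param',\rob(\varphi,\traj_{\param'}))$ from $\jointdist$,
\[
\prob\!\left(\rob(\varphi,\traj_{\param'}) \in [\surrogate(\param')-d,\ \surrogate(\param')+d]\right) \ge 1-\epsilon.
\]
This inequality is already established in the proof of Theorem~\ref{thm:predict_interval} and does not depend on the optimization step, so it is available verbatim.

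Next I would chain the defining inequalities of the approximations with those of the exact optima. By hypothesis $\vmin \le \vstarmin$ and $\vmax \ge \vstarmax$, while by definition $\vstarmin \le \surrogate(\param') \le \vstarmax$ for every $\param' \in \paramspace$. Composing these yields $\vmin \le \surrogate(\param') \le \vmax$, and hence the pointwise interval inclusion
\[
[\surrogate(\param')-d,\ \surrogate(\param')+d] \subseteq [\vmin-d,\ \vmax+d].
\]

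Finally I would apply monotonicity of the probability measure: the event $\{\rob(\varphi,\traj_{\param'}) \in [\surrogate(\param')-d,\ \surrogate(\param')+d]\}$ is contained in the event $\{\rob(\varphi,\traj_{\param'}) \in [\vmin-d,\ \vmax+d]\}$, so the latter has probability at least $1-\epsilon$, which is the claim. I expect no genuine obstacle here—the argument is a one-line containment combined with monotonicity of $\prob$—so the only points requiring care are bookkeeping the directions of the two approximation inequalities and confirming that $\param'$ is a genuinely new i.i.d.\ sample, so that the \emph{marginal} guarantee of Theorem~\ref{thm:conformal} (a statement about a fresh exchangeable draw, not a claim conditional on the training set) indeed applies.
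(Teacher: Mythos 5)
Your proposal is correct and follows exactly the argument the paper intends: the corollary is stated without a separate proof precisely because it is the proof of Theorem~\ref{thm:predict_interval} (conformal guarantee of Eq.~\eqref{eq:confguarantee}, pointwise bounds on $\surrogate(\param')$, interval containment, monotonicity of $\prob$) with the exact optima replaced by their conservative approximations $\vmin \le \vstarmin$ and $\vmax \ge \vstarmax$. You are also right that the upper endpoint $\vmin+d$ in Eq.~\eqref{eq:adjusted_constant_interval} is a typographical error for $\vmax+d$, since the containment argument only justifies the wider interval.
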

The bounds $\vmin$ and $\vmax$ in
Corollary~\ref{co:adjusted_predict_interval} can be computed using
global optimization solvers, SMT sovlers, or range analysis tools for
neural networks~\cite{Taylor,dutta2018learning}, based on the kind of
regression model used.


We can use the bounds obtained in Theorem~\ref{thm:predict_interval}
(similarly those with Corollary~\ref{co:adjusted_predict_interval}) to derive
probabilistic bounds on the Boolean satisfaction of a given STL
property $\varphi$, as expressed in
Theorem~\ref{thm:pure-sign-interval}.
\begin{theorem}
\label{thm:pure-sign-interval}
If $\vstarmin - d > 0$, then $\prob_{\distribution_\param}(\traj_\param \models \varphi \mid
\param \in \paramspace) \geq 1-\epsilon$.  
If $\vstarmax + d < 0$ then $\prob_{\distribution_\param}(\traj_\param \not\models \varphi
\mid \param \in \paramspace)
\geq 1-\epsilon$.
\end{theorem}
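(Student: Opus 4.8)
The plan is to bridge the quantitative guarantee of Theorem~\ref{thm:predict_interval} to the Boolean satisfaction claim using the sign-soundness of the STL robustness semantics. The single structural fact I rely on is that, writing $\rob(\varphi,\traj)$ for the robustness at time $0$, we have $\rob(\varphi,\traj) > 0 \implies \traj \models \varphi$ and $\rob(\varphi,\traj) < 0 \implies \traj \not\models \varphi$. This is the standard correctness property of the robust satisfaction value, and it follows by a routine structural induction on $\varphi$ from Definition~\ref{def:RSV}: the base case $f(\traj)\ge c$ is immediate since $\rob = f(\traj(t)) - c$ has the same sign as whether the predicate holds; negation flips both the sign and the truth value; conjunction takes a $\min$, which is positive iff both conjuncts are; and the temporal operators reduce to $\sup/\inf$ over time, which preserve the sign correspondence. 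I would state this as a preliminary observation (or cite it from~\cite{fainekos_robustness_2009,donze_robust_2010}) rather than re-derive it in full.

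Given that fact, the first claim is immediate. Assume $\vstarmin - d > 0$. By Theorem~\ref{thm:predict_interval},
\[
\prob\left(\rob(\varphi,\traj_\param) \in [\vstarmin - d, \vstarmax + d] \mid \param \sim \distribution_\param\right) \geq 1-\epsilon .
\]
Whenever $\rob(\varphi,\traj_\param)$ lies in this interval it is at least $\vstarmin - d > 0$, hence strictly positive, so $\traj_\param \models \varphi$ by sign-soundness. Thus the event $\{\rob(\varphi,\traj_\param) \in [\vstarmin - d, \vstarmax + d]\}$ is contained in the event $\{\traj_\param \models \varphi\}$, and monotonicity of probability gives $\prob_{\distribution_\param}(\traj_\param \models \varphi \mid \param \in \paramspace) \geq 1-\epsilon$.

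The second claim is symmetric. Assume $\vstarmax + d < 0$. The same interval guarantee from Theorem~\ref{thm:predict_interval} holds with probability at least $1-\epsilon$, and on that event $\rob(\varphi,\traj_\param) \leq \vstarmax + d < 0$ is strictly negative, so $\traj_\param \not\models \varphi$. Event inclusion and monotonicity of probability again yield $\prob_{\distribution_\param}(\traj_\param \not\models \varphi \mid \param \in \paramspace) \geq 1-\epsilon$.

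There is essentially no hard step here: the probabilistic content is entirely inherited from Theorem~\ref{thm:predict_interval}, and the only genuine ingredient is the deterministic sign-soundness of STL robustness. The one point I would be careful about is that Definition~\ref{def:RSV} as stated does not spell out the semantics of every operator (e.g.\ disjunction, $\always$, $\eventually$), so I would either invoke the standard soundness result or note that the omitted operators are derived from those given and inherit the sign property. A secondary subtlety is the boundary case $\rob = 0$, which is precisely why the \emph{strict} inequalities $\vstarmin - d > 0$ and $\vstarmax + d < 0$ in the hypotheses are essential: they keep the entire robustness interval strictly on one side of zero and rule out the ambiguous zero-robustness case, for which neither $\models$ nor $\not\models$ can be concluded from the sign alone.
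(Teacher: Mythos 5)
Your proof is correct and follows essentially the same route as the paper's: both invoke the sign-soundness of STL robustness from \cite{fainekos_robustness_2009} (positive robustness implies satisfaction, negative implies violation) and combine it with the prediction interval of Theorem~\ref{thm:predict_interval}. You simply make explicit the event-inclusion/monotonicity step and the role of the strict inequalities, which the paper leaves implicit.
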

\begin{proof}
From \cite{fainekos_robustness_2009}, we know that
$\rob(\varphi,\traj_\param) > 0 \implies \traj_\param \models
\varphi$. Thus, if the lower bound of the prediction interval in
Theorem~\ref{thm:predict_interval} is positive, then $\traj_\param
\models \varphi$. The second case follows by a similar argument.
\end{proof}
If the first statement in the above theorem holds, we say that
$\paramspace$ {\em safe}, if the second statement holds, we say that
$\paramspace$ is {\em unsafe}, and if neither statement holds (i.e.
the predicted interval contains $0$), then we say that $\paramspace$
is {\em unknown}. While Theorem~\ref{thm:pure-sign-interval} allows us
a way to identify whether a region in the parameter space is is safe
(or unsafe), unfortunately there are two challenges: (1) the function
mapping $\param$ to $\rob(\varphi,\simint(\param))$ is a highly
nonlinear function in general, and {\em a priori} choice for a
regression algorithm $\regalg$ that fits this function with small
residual values may be difficult, (2) if there is large variation in
the value of the regression function over $\paramspace$, it is likely
that the conformal interval contains $0$, thereby marking
$\paramspace$ as unknown.  To circumvent this issue, one solution is
to split the parameter space $\paramspace$ into smaller regions where
it may be possible to get narrow conformal intervals at the same level
of probability threshold. We present a na\"{i}ve algorithm based on
parameter-space partitioning next. 

\subsection{Na\"{i}ve Parameter Space Partitioning} 
\label{sec:naive}

We now present an algorithm that uses
Theorem~\ref{thm:predict_interval} (or
Corollary~\ref{co:adjusted_predict_interval}) to provide probabilistic
guarantee by recursively splitting the parameter space $\paramspace$
into smaller regions such that each region can be labeled as safe,
unsafe or unknown. The basic idea of this algorithm is to compute the
conformal interval using Theorem~\ref{thm:predict_interval} and then
check if $\vstarmin-d <0$ and $\vstarmax+d >0$. If yes, we need to
partition the region. After partitioning the region, we have to repeat
the process of computing the conformal interval for each of the
sub-regions.  Note that the probability in
Theorem~\ref{thm:predict_interval} (and Theorem~\ref{thm:conformal} )
is marginal, being taken over all the i.i.d. samples $\{\param_i,
\rob_i\}$ from $\jointdist$. Therefore, when we work on each subset $S
\subseteq \paramspace$ after the partitions, we will have to restrict
$\param$ to be in $S$ (according to Equation~\eqref{eq:pdfrestr}) to
ensure that the Theorem~\ref{thm:predict_interval} is valid. We abuse
the notation and denote the joint distribution $\jointdist$ when
$\param$ is restricted to be sampled from $S \subseteq \paramspace$ by
$\jointdist \restr S$.

\begin{algorithm}[htb]
\caption{Parameter space partition with respect to STL formulas using conformal regression.}
\label{alg:main}
\SetKwInOut{Input}{input}
\SetKwInOut{Output}{output}
\Input{Parameter space $\paramspace$ and corresponding distribution
$\distribution_\param$, simulator $\simulator$ and interpolation
method to provide $\simint$, miscoverage level $\alpha$, regression
algorithm $\regalg$, an STL formula $\varphi$, a vector $\mindia$}
\Output{Parameter set $\paramspace^+$
that lead to satisfaction of $\varphi$, $\paramspace^-$ that lead to
violation of $\varphi$, and the rest parameter set $\paramspace^{U}$
that is undecided}
$\paramspace^+,\paramspace^-, \paramspace^{U} \gets \emptyset$, $\paramspace^{r} \gets \{ \paramspace \}$\;
\While{ $\paramspace^{r} \neq \emptyset$ \label{al:while}
}{ 	
$S \gets \mathtt{Pop}(\paramspace^{r})$ \;
$\param_1,\cdots, \param_m \gets \mathtt{IID\_Sample}(\distribution_\param\restr S)$ \;
\For{$i = 1, \cdots, m$}
{
$\rob_i \gets \rob(\varphi, \traj_{\param_i})$ \;
}
$\hat{\mu} ,d \gets \confint(\{(\param_i, \rob_i)\}_{i=1}^m, \alpha, \regalg)$\;
$v_{\max} \gets \max_{\param \in S}\hat{\mu}(\param)$, $v_{\min} \gets \min_{\param \in S}\hat{\mu}(\param)$\;
\uIf{$v_{\min}-d \geq 0$}{
$\paramspace^{+} \gets \paramspace^{+} \cup (S,[v_{\min}-d, v_{\max}+d])$ \label{alg2:positive}\;
}
\uElseIf{$v_{\max}+d \leq 0$}{
$\paramspace^{-} \gets \paramspace^{-} \cup (S,[v_{\min}-d, v_{\max}+d])$ \label{alg2:negative}\;
}
\uElseIf{$\mathtt{Diameters}(S) < \mindia\mathtt{Diameters}(\paramspace)$}{
$\paramspace^{U} \gets \paramspace^{U} \cup (S,[v_{\min}-d, v_{\max}+d])$ \label{alg2:unknown}\;
}
\Else{
$\paramspace^{r}.\mathtt{Push}(\mathtt{Partition}(S, \regalg))$ \label{alg2:partition}\;
}
}\label{al1:endwhile}
\Return $\paramspace^{+}, \paramspace^{-}, \paramspace^{U}$\;
\end{algorithm}

Algorithm~\ref{alg:main} searches over the parameter space
$\paramspace$ and partitions it to sets $\paramspace^{+}$,
$\paramspace^{-}$, and $\paramspace^{U}$, along with the prediction
intervals for the robustness values in each set. We first check if the
robustness value is strictly positive or negative and accordingly add
the region being inspected $S$ into $\paramspace^+$ or $\paramspace^-$
(Lines~\ref{alg2:positive} and~\ref{alg2:negative}).  When
Algorithm~\ref{alg:main} cannot decide whether $S$ belongs to
$\paramspace^{+}$ or $\paramspace^{-}$ the interval contains $0$, we
first check if for all $n$, the diameter of $S$ along the $n^{th}$
parameter dimension less than the fraction $\mindia_n
\mathtt{Diameters}(\paramspace)_n$. We assume that the vector
$\mindia$ is provided by the user.  If yes, the region is marked as
unknown. Otherwise, we partition $S$ into a number of subregions, that
are then added to a worklist of regions (Line~\ref{alg2:partition}).
In our implementation, in order to keep the number of subsets to be
explored bounded, we randomly pick a dimension in the parameter space,
and split the parameter space into two equal subsets along that
dimension.  Note that the partitioning can be accelerated by using
parallel computation, but we leave that for future exploration. For
each subset $S$, Algorithm~\ref{alg:main} additionally gives the
corresponding prediction interval, which indicates how good (or bad)
the trajectories satisfy (or violate) $\varphi$.  
\begin{theorem}
\label{thm:subsets}
In Algorithm~\ref{alg:conformal}, 
\(
\prob(\traj_\param \models \varphi \mid \param \sim \jointdist \restr \bigcup S \in \paramspace^{+}) \geq 1-\epsilon,
\), and
\(
\prob(\traj_\param \not\models \varphi \mid \param \sim \jointdist \restr \bigcup S \in \paramspace^{-}) \geq 1-\epsilon.
\)
\end{theorem}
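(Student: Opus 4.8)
The plan is to prove the safe-region claim and then obtain the unsafe-region claim by a symmetric argument. The proof factors into two independent parts: a \emph{per-region} guarantee inherited directly from Theorem~\ref{thm:predict_interval}, and an \emph{aggregation} step that lifts these per-region guarantees to the union $\bigcup_{S \in \paramspace^{+}} S$ via a law-of-total-probability (convex combination) argument.

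First I would fix an arbitrary region $S$ that Algorithm~\ref{alg:main} deposits in $\paramspace^{+}$. Such a region is added at Line~\ref{alg2:positive}, so its locally computed bounds satisfy $\vmin - d \geq 0$, where $(\surrogate,d) = \confint(\{(\param_i,\rob_i)\}, \alpha, \regalg)$ is built from the i.i.d. sample drawn from $\distribution_\param \restr S$ and $\vmin = \min_{\param \in S}\surrogate(\param)$, $\vmax = \max_{\param \in S}\surrogate(\param)$ are the exact optima. Instantiating Theorem~\ref{thm:predict_interval} with $\paramspace$ replaced by $S$ and the distribution $\distribution_\param \restr S$, the returned interval $[\vmin - d, \vmax + d]$ captures $\rob(\varphi,\traj_\param)$ with probability at least $1-\epsilon$. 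Because $\vmin - d \geq 0$, this interval is nonnegative, so the captured event entails $\rob(\varphi,\traj_\param) \geq 0$; invoking $\rob(\varphi,\traj_\param) > 0 \implies \traj_\param \models \varphi$ as in Theorem~\ref{thm:pure-sign-interval}, I obtain the per-region bound
\[
\prob(\traj_\param \models \varphi \mid \param \sim \distribution_\param \restr S) \geq 1-\epsilon
\]
for every $S \in \paramspace^{+}$.

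Next I would aggregate over $\paramspace^{+}$. The worklist in Algorithm~\ref{alg:main} starts from $\{\paramspace\}$ and only ever replaces a popped region by the disjoint subregions produced by $\mathtt{Partition}$, so the regions eventually placed in $\paramspace^{+}$ are pairwise disjoint. Writing $A = \bigcup_{S \in \paramspace^{+}} S$ and $w_S = \prob(\param \in S \mid \param \sim \distribution_\param \restr A)$, disjointness gives $\sum_{S \in \paramspace^{+}} w_S = 1$, and conditioning $\distribution_\param \restr A$ on the event $\{\param \in S\}$ recovers $\distribution_\param \restr S$. The law of total probability then yields
\[
\prob(\traj_\param \models \varphi \mid \param \sim \distribution_\param \restr A) = \sum_{S \in \paramspace^{+}} w_S\, \prob(\traj_\param \models \varphi \mid \param \sim \distribution_\param \restr S) \geq (1-\epsilon)\sum_{S \in \paramspace^{+}} w_S = 1-\epsilon,
\]
which is the first claim. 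The second claim is identical after replacing Line~\ref{alg2:positive} by Line~\ref{alg2:negative} (so $\vmax + d \leq 0$, whence $\rob(\varphi,\traj_\param) < 0 \implies \traj_\param \not\models \varphi$) and using the violation half of Theorem~\ref{thm:pure-sign-interval}.

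The step requiring the most care, and the main obstacle, is that $\paramspace^{+}$ is itself a random object, determined by all the samples drawn across the run, so the conditioning in the statement is delicate. The argument above stays clean because each region's conformal certificate is produced from its own fresh i.i.d. draw from $\distribution_\param \restr S$, and the convex-combination step never needs independence \emph{across} regions: each summand is at least $1-\epsilon$ no matter which partition was realized, so the weighted average inherits the bound. A second, minor gap is the boundary case $\rob(\varphi,\traj_\param) = 0$, where the test at Line~\ref{alg2:positive} uses $\geq 0$ while satisfaction strictly needs $\rob > 0$; this is immaterial whenever the robustness distribution has no atom at $0$, which matches the continuity hypothesis already invoked in Theorem~\ref{thm:conformal}.
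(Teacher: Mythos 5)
Your proof is correct and takes essentially the same route as the paper's, whose entire argument is the single sentence that the theorem ``directly follows from Theorems~\ref{thm:predict_interval} and \ref{thm:pure-sign-interval} and the total probability theorem'' --- your per-region guarantee plus convex-combination aggregation is exactly that argument written out in full. The two caveats you raise (the randomness of $\paramspace^{+}$ as a sample-dependent object, and the boundary case $\rob = 0$ created by the non-strict test at Line~\ref{alg2:positive} versus the strict inequalities in Theorem~\ref{thm:pure-sign-interval}) are genuine subtleties that the paper's one-line proof silently ignores, so your treatment is, if anything, more careful than the original.
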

Theorem~\ref{thm:subsets} directly follows from
Theorems~\ref{thm:predict_interval} and \ref{thm:pure-sign-interval}
and the total probability theorem.

\section{Gaussian Processes for Refinement}\label{sec:refinement}
A drawback of Algorithm~\ref{alg:main} is that the na\"{i}ve splitting
procedure is not scalable in high dimensions, and may have poor 
performance if the safe/unsafe regions have arbitrary shapes. In this
section, we instead suggest the use of Gaussian Processes (GP) coupled
with Bayesian updates to intelligently partition regions. Recall from
Section~\ref{sec:prelim}, for each parameter value $\param$, 
the GP model allows representing the mean $\mu(\param)$ and 
$\sigma^2(\param)$ in terms of samples already explored in the parameter
space. In a GP model, at sampled parameter values, the variance is
zero, but at points that are away from the sampled values, the variance
could be high. We now give the symbolic expressions for the mean
and variance of a GP model in terms of a kernel function 
$k(\param,\param)$. For ease of exposition let $\hat{\paramspace}$
denote the vector of parameter values already sampled. Then,
Let $Y$ denote the vector of robustness values for parameter values in
$\hat{\paramspace}$. Then, from \cite{gpbook}, Chapter 2, we have:
\begin{eqnarray}
    \mu(\param)    & = & k(\param,\hat{\paramspace})^{\top}\,k(\hat{\paramspace},\hat{\paramspace})^{-1}\, Y \\
    \Sigma(\param) & = & k(\param,\param) - 
                         k(\param,\hat{\paramspace})\,k(\hat{\paramspace},\hat{\paramspace})^{-1}\,
                         k(\hat{\paramspace},\param) \\ 
    \sigma(\param) &= & \sqrt{\Sigma(\param)}
\end{eqnarray}
The main idea is to use the mean and variance of the GP model to
prioritize searching parameter values where: (a) the robustness may 
be close to zero, (b) the variance of the GP model may be high. 
These two choices give us two different ways to partition the 
parameter region that we now explain. 

In the literature on GP-based Bayesian optimization, there is work
on defining {\em acquisition functions} that are used as targets for
optimization. Examples include UCB (Upper Confidence Bound) acquisition
function that is a combination of the mean and variance of the GP,
EI (Expected Improvement) which focuses on the expected value of the
improvement in the function value etc. Inspired by the UCB function
that allows a trade off between exploration and exploitation, we 
consider two acquisition functions: (1)  the first is focused on
pure exploration and uses the variance of the GP as the objective for
maximization, (2) the second is the difference between the mean and
the standard deviation. The rationale for the second function is 
that if $\mu(\param) - \sigma(\param)$ is lower than $0$, then it
is an indicator of a low robustness region. 

\begin{algorithm}[htb]
\caption{Parameter Space Partition using GP models $\mathtt{Partition}(S,\regalg)$}
\label{alg:partition}
\SetKwInOut{Input}{input}
\SetKwInOut{Output}{output}
\Input{Parameter set $S$, regression algorithm $\regalg$}
\Output{parameter sets $S_1,\ldots$ that will be pushed into the set $\paramspace^{r}$}
$f(\param)$ $\gets$ $\mathsf{acquisition}(\mu(\param),\sigma(\param))$ \;
$\param_1,\ldots,\param_k$ = $\{ \param \mid \psi(f(\param))\ \text{is true} \}$\;
\Return $\mathsf{Split}(S, \param_1, \ldots, \param_k)$ \;
\end{algorithm}

Algorithm~\ref{alg:partition} presents the new $\mathtt{Partition}$ function of Algorithm~\ref{alg:main} using GP-based acquisition.
In general, we assume a function $f$ that is an acquisition function,
and some property $\psi$ of the acquisition function. We pick points
$\param^*$ such that $\psi(f(\param))$ is true. For the first acquisition
function, $\psi$ is the property that chooses $\param^*$ that maximizes
the value of $f(\param)$. For the second acquisition function, we pick
$\psi$ as the property where $f(\param) \approx 0$, that is, based on
finding the roots of the function $f(\param)$. 
In Algorithm~\ref{alg:partition}, $\mathsf{Split}$ is a function that 
uses the points $\param_1$, $\ldots$, $\param_k$ to split $S$ into a
number of regions. The number of regions depends exponentially on the
dimension of the parameter space. For example, when we use the first
acquisition function, we use a single parameter value that maximizes the
uncertainty. If the dimension of the parameter space is $2$, then each
split point will generate $4$ new regions, and this is also called the
{\em greatest uncertainty split}. For the root-finding based procedure
(using  the second acquisition function), the number of regions depends on
the number of approximate roots of the acquisition function, we also 
call this the {\em root split} method. Essentially, the points identified
by $\psi(f(\param))$ become corner points of new regions.

\section{Case Studies}\label{sec:casestudies}



In this section, we present case studies of CPS models, and identify
regions in the parameter space that we can mark as safe, unsafe or
unknown with high probability. We tried each of the case studies with
different regression algorithms, with Gaussian Process regression
leading to smaller residuals, ergo, narrower conformal intervals. We
tried both (a) the na\"{i}ve algorithm that recursively splits the
parameter space, and (b) the algorithm which adaptively partitions the
parameter space exploiting the uncertainty as expressed by a Gaussian
Process prior.

For all case studies, we used a miscoverage level of $\epsilon = 0.05$
(i.e. providing a probability threshold of $95\%$). The GP regression
uses a {\em sum kernel}  $k(\param_i,\param_j) =
k_1(\param_i,\param_j) + k_2(\param_i,\param_j)$, where $k_1$ is the
dot product kernel, i.e., $k_1(\param_i,\param_j) = \sigma_0^2 +
\param_i \cdot \param_j$, and $k_2$ is the white kernel, where
$k_2(\param_i,\param_j) = 1$ if $\param_i = \param_j$ and $0$
otherwise. Table~\ref{tb:benchmark1} summarizes the performance of
Algorithm~\ref{alg:main} using Algorithm~\ref{alg:partition} with the
greatest uncertainty split method on all our benchmarks. Moreover,
slightly different from Algorithm~\ref{alg:main}, when a cover reaches
the lower size bound $\delta_{\min}$, it is marked unsafe if we can find
a counter-example inside it and unknown otherwise.

\begin{table}[t]
\centering
\scalebox{0.9}{
\begin{tabular*}{.99\textwidth}{@{\extracolsep{\fill}}llllll}
\toprule
Case Study &   \multicolumn{3}{c}{Ratio of Volumes (\%)} & Sims./ & Spec.\\
\cline{2-4}
           & Safe & Unsafe &  Unk.   & region  & \\ 
\midrule
Mountain Car 1 & 88.72 & 11.28 & 0.00 & 100 &  $\specmc$  \\
  
Lane Keep Assist 1 & 100 & 0.00 & 0.00 & 100 &  $\speclkabounds$ \\
Lane Keep Assist 2 & 77.23 & 21.97 & 0.80 & 100 & $\speclkasettle$ \\
F16 Level Flight & 67.18 & 32.81 & 0.00 & 100 &  $\phi_{\textsc{f16,level}}$ \\
F16 Pull up & 43.52 & 56.09 & 0.40 & 100 &  $\phi_{\textsc{f16,pullup}}$ \\
F16 GCAS & 3.91 & 96.09 & 0.00 & 100 & $\phi_{\textsc{f16,GCAS}}$ \\



Simglucose 2M & 45.45 & 54.55 & 0.00 & 10 & $\psi_{\textsc{simglucose,2M}}$ \\
Simglucose 2M & 100 & 0.00 & 0.00 & 10 & $\phi_{\textsc{simglucose,2M}}$ \\

Simglucose 3M & 100 & 0.00 & 0.00 & 10 & $\phi_{\textsc{simglucose,3M}}$ \\
Simglucose 4M & 100 & 0.00 & 0.00 & 10 & $\phi_{\textsc{simglucose,4M}}$ \\
Simglucose 5M & 100 & 0.00 & 0.00 & 10 & $\phi_{\textsc{simglucose,5M}}$ \\
\bottomrule

\end{tabular*}
}
\caption{Performance of Algorithm~\ref{alg:main} using the GP-based greatest uncertainty split method with 95\% confidence level}
\label{tb:benchmark1}
\end{table}

\paragraph{Mountain Car.}
A classic problem used in the RL literature the mountain car models an under-powered car attempting to drive up a hill. To successfully climb the steep hill, the car needs to accumulate potential energy by going in the opposite direction and then use the gained momentum to power over the hill. The dynamics of the mountain car~\cite{zareistatistical} are described using $\dot{x} = v, \dot{v} = - mg \cos(3x) + \frac{F}{m}u - \mu \dot{x}$,
where $x$, $v$, $m = 0.2$ kg, $g = 9.8
ms^{-2}$, $f = 0.2$ N, and $\mu = 0.5$ are respectively the position
(in meters), velocity (in $ms^{-1}$), mass, acceleration due to
gravity, force produced by the car engine, and the friction factor
respectively. The input from the controller is denoted by $u(t)$.  The
parameter space is defined by the initial position $\initPos$ and
velocity $\initVel$ of the car. 
We wish to identify regions of space that
satisfy or violate the property of reaching the
goal. The region we choose for analysis is defined as as $\paramspace$ = $(\initPos,\initVel) \in [-0.7,0.2]\times[-0.5,0.5]$, which is 
comparable to the region used in \cite{zareistatistical}. 
We consider the initial parameter setting safe if it satisfies
the STL formula $\specmc = \eventually_{[0, 10]} (x(t) > 0.45).$

\begin{figure}[t]\centering
\subfloat[Parameter space partitioning obtained using naive split ($\delta_{\min} = 0.02$)]{
\includegraphics[width=0.45\linewidth]{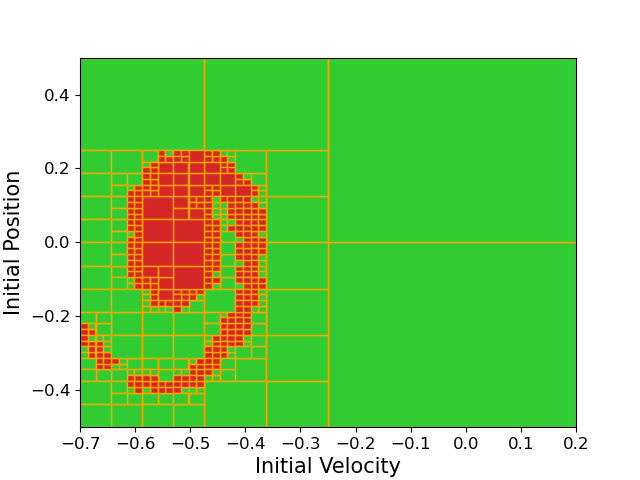}\label{mc_regression_naive}}
\subfloat[Parameter space partitioning obtained using root split ($\delta_{\min} = 0.02$)]{
\includegraphics[width=0.45\linewidth]{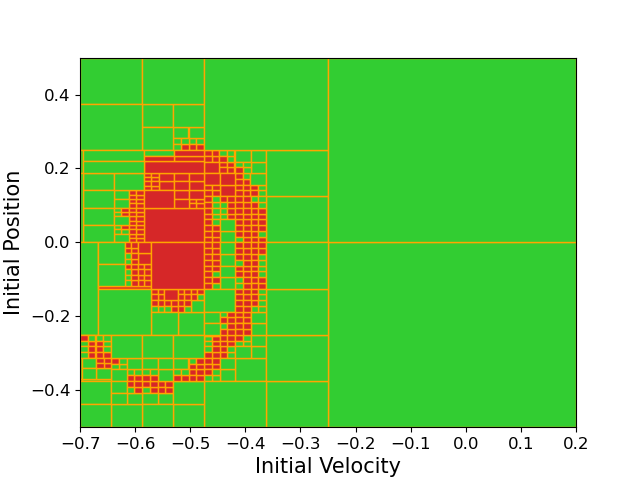}\label{mc_regression_zero_points}}
\hfill
\subfloat[Parameter space partitioning obtained using greatest uncertainty split ($\delta_{\min} = 0.02$)]{
\includegraphics[width=0.45\linewidth]{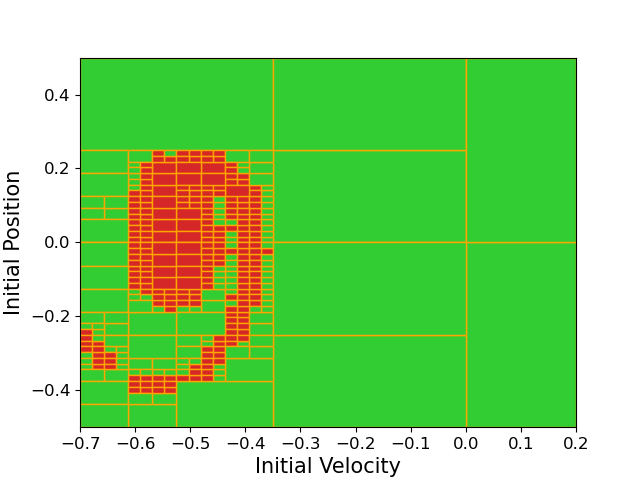}\label{mc_regression_ucb}}
\subfloat[{Mountain Car Ground Truth}]{
\includegraphics[width=0.45\linewidth]{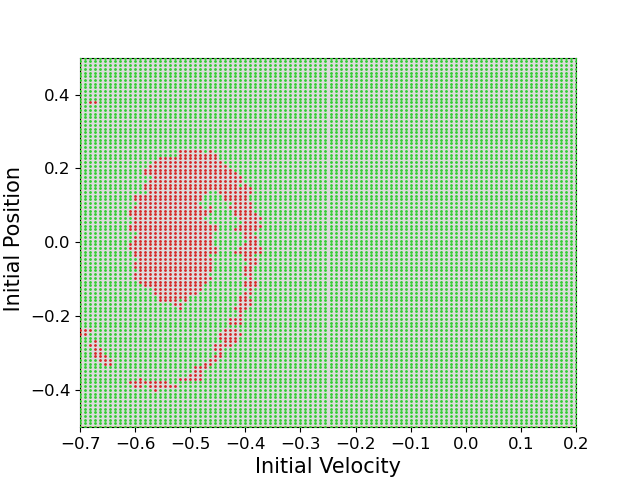}\label{mc_gt_dot}}
\caption{\small Mountain Car parameter space partitioning using different approaches}
\label{figure:mc}
\end{figure}

\begin{wraptable}{r}{0.6\textwidth}
\centering
\scalebox{0.8}{
\begin{tabular}{@{\extracolsep{\fill}}lclll}
\toprule
$\mathtt{Partition}$ method & \#  covers &   \multicolumn{3}{c}{Ratio of Volumes (\%)} \\
\cline{3-5}
       &    & Safe & Unsafe &  Unk.   \\ 
\midrule
Na\"{i}ve (Section~\ref{sec:naive}) & 457 & 89.01 & 10.99 & 0.00\\
Root split &  481 & 89.23 & 10.77 & 0.00\\
Greatest uncertainty & 364 & 88.72 & 11.28 & 0.00 \\
\bottomrule

\end{tabular}
}
    \caption{\footnotesize Comparison  of Algorithm~\ref{alg:main} using different $\mathtt{Partition}$ approaches with 95\% confidence level.}
    \label{tab:diff-part}
\end{wraptable}
We compare with approximate ground truth obtained by uniform sampling of the parameter
space in Fig.~\ref{mc_gt_dot}. Green dots indicate parameter values that lead
to satisfaction of the property, while red dots indicate violations. In 
Fig.~\ref{mc_regression_naive}~\ref{mc_regression_zero_points}~\ref{mc_regression_ucb}, we show the results obtained using
Algorithm~\ref{alg:main}. Table~\ref{tab:diff-part} summaries the parameter space partitioning results where green regions correspond to parameters that leads to satisfaction of $\specmc$, and red regions correspond to parameters that leads to violation of $\specmc$. Although the partition covers look different using the different partitioning strategies, the safe/unsafe ratio remain similar and they all provide the same level of probabilistic guarantees. Among the three partitioning approaches, we have the least number of total covers using the greatest uncertainty split method. In general, less number of covers lead to less number of simulations. 




\paragraph{Reinforcement Learning Lane Keep Assist.}

Lane-keep assist (LKA) is an automated
driver assistance technique used in semi autonomous vehicles to keep
the ego vehicle traveling along the centerline of a lane. The recent
reinforcement learning toolbox from \matlab introduces a Deep Q
Network (DQN)-based reinforcement learning agent that seeks to keep
the ego vehicle centered. The inputs to the RL agent are the lateral
deviation $e_1$, relative yaw angle (i.e. yaw error)
$e_2$, their derivatives and their integrals.  Specifics of the
DQN-based RL agent and its training can be found in \cite{rllka}. The
parameter space for this model is the initial values for $e_1$ and
$e_2$, where we looked at region $\paramspace=(e_1, e_2) \in [-0.3, 0.3]\times[-0.2, 0.2]$. We are interested in the signals corresponding to the lateral
deviation and yaw error. We are interested in checking the
control-theoretic properties such as overshoot/undershoot bounds and
the settling time for these signals. In this experiment, we consider
two properties characterizing bounds on $e_2$ and settling time for
$e_1$; $\speclkasettle:\always_{[2,15]}(|e_1| < 0.025)$ and $\speclkabounds:\always_{[0,15]}(e_2 < 0.4 \wedge e_2 > -0.4)$.

Figure~\ref{figure:lka} shows the parameter space partitioning results and the ground truth with respect to
$\speclkasettle$. Our technique was able to certify that $\speclkabounds$ is satisfied by the entire region with 95\% confidence.


\begin{figure}[htb]\centering
\subfloat[Parameter space partitioning using greatest uncertainty split ($\delta_{\min}=0.01$)]{
\includegraphics[width=0.45\linewidth]{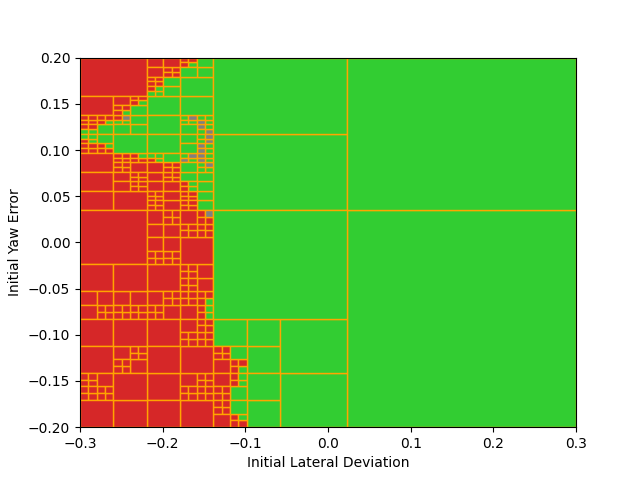}\label{lka_regression}}
\subfloat[{LKA (settle) Ground Truth}]{
\includegraphics[width=0.45\linewidth]{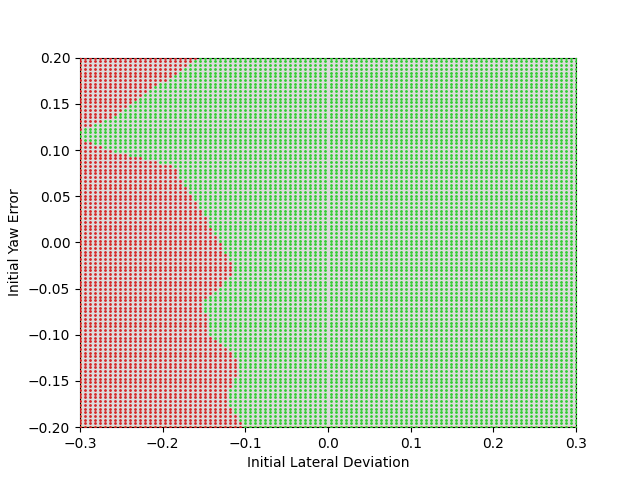}\label{lka_gt_dots}}
\caption{\small Lane Keep Assist }
\label{figure:lka}
\end{figure}




\paragraph{F-16 Control System.}
We now present the results of our technique on the verification challenge~\cite{heidlauf2018verification} which requires the analysis of the F-16 flight control loop system. The system is modeled as a hierarchical control system - an outer-loop autopilot and an inner loop tracking and stabilizing controller (ILC), and a $13$ dimensional non-linear dynamical plant model.
The plant dynamics are based on a 6 degrees of freedom standard airplane model~\cite{stevens2015aircraft} represented by a system of $13$ ODEs describing the force equations, kinematics, moments and a first-order lag model for the afterburning turbofan engine. These ODEs describe the evolution of the system states, namely velocity $vt$, angle of attack $\alpha$, sideslip $\beta$, altitude $h$, attitude angles: roll $\phi$, pitch $\theta$, yaw $\psi$, and their corresponding rates $p$, $q$, $r$, engine $power$ and two more states for translation along north and east. The non-linear plant model uses linearly interpolated lookup tables to incorporate wind tunnel data.
The control system is composed of an autopilot that sets the references on upward acceleration, stability roll rate and the $throttle$. The ILC uses an LQR state feedback law to track the references and computes the control input for the $aileron$, $rudder$ and the $elevator$.
We now present our results on three separate scenarios capturing specific contexts. Each scenario defines the parameter set and an associated specification.




\begin{figure}[t]\centering

\subfloat[Parameter space partitioning using greatest uncertainty split ($\delta_{\min} = 0.1$)]{
\includegraphics[width=0.44\linewidth]{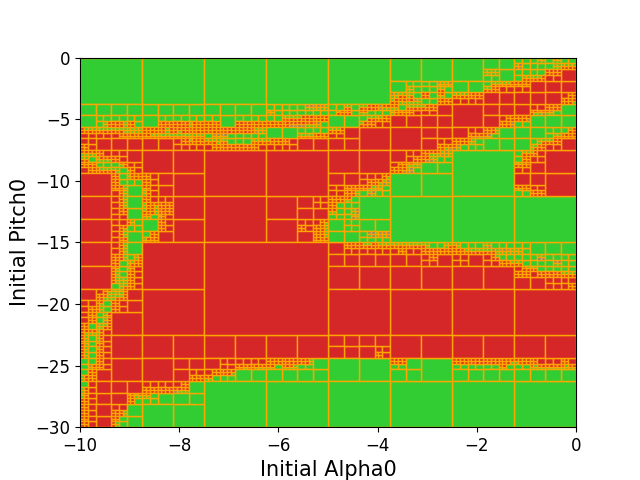}\label{pullup_regression}}
\subfloat[{Ground truth}]{
\includegraphics[width=0.44\linewidth]{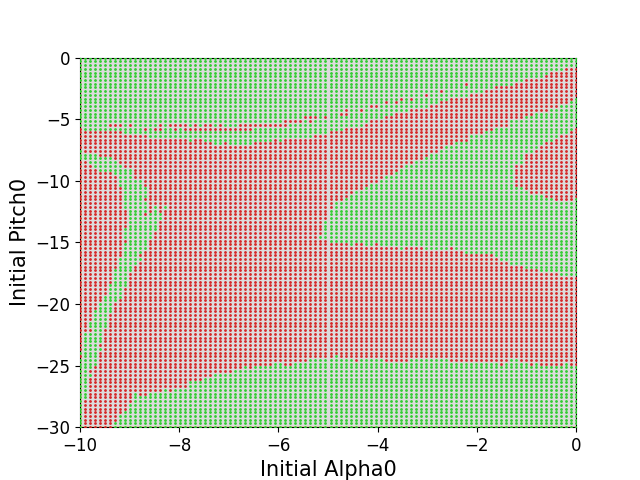}\label{pullup_gt_dot}}
\hfill
\subfloat[Parameter space partitioning using greatest uncertainty split ($\delta_{\min} = 100$)]{
\includegraphics[width=0.44\linewidth]{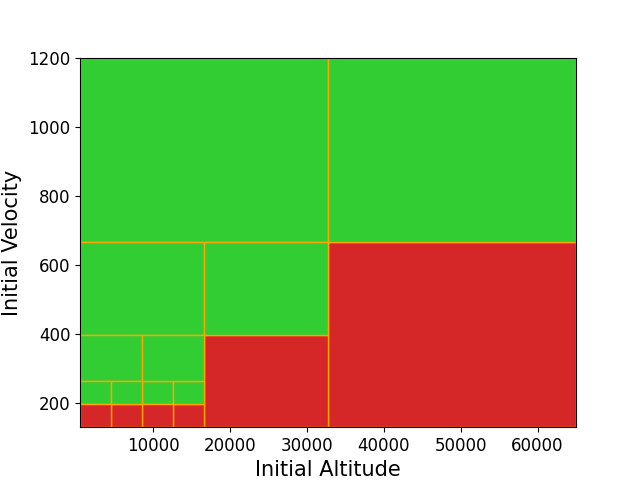}\label{f16_regression_level}}
\subfloat[{Ground truth}]{
\includegraphics[width=0.44\linewidth]{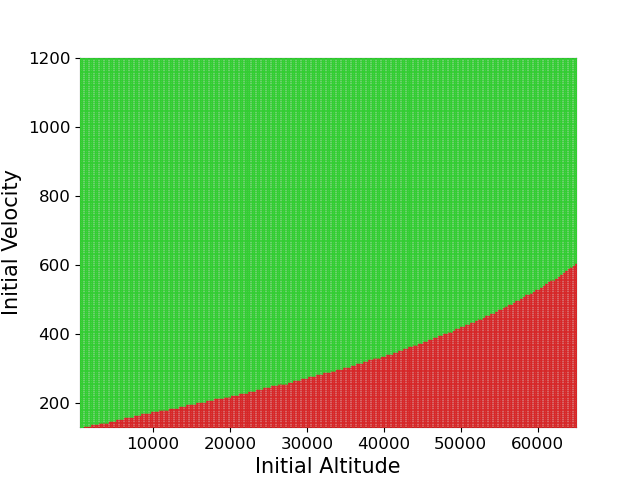}\label{f16_level_gt}}

\caption{\small F16 - Level Flight (bottom), Pull up maneuver (top)}
\label{figure:f16_level}
\end{figure}


\textbf{F16-Pull up maneuver}. This scenario demonstrates the tracking of a constant autopilot command requesting an upward acceleration ($N_z=5g$). The ILC tries to track the reference without undesirable transients like pitch oscillations and exceeding pitch rate limits. We modify the controller gains to highlight the violations of the spec $\phi_{\textsc{f16,pullup}}: \always_{[0,10]}q \leq 120deg/s$. The bounded parameter space is described by initial values of $\alpha\in[-10,0]deg, \theta\in[-30,0]deg$ and the results are shown in Figure~\ref{pullup_regression},~\ref{pullup_gt_dot}.

\textbf{F16-Level Flight}. This scenario describes straight and level flight with a constant attitude and $0$ initial angular rates. The bounded parameter space is defined by the initial altitude $h\in[500, 65000]$ and velocity $vt\in[130, 1200]$. The autopilot references are set to zero, and the ILC tries to maintain a constant altitude and angle of attack $\alpha$. As the F-16 can fly over a large range of altitudes and velocities, a single LQR computed against the linearzied model can not satisfy the goal and results in a stall defined by $\phi_{\textsc{f16,level}}: \always_{[0,10]}\alpha \leq 35deg$. This is reflected in the results as shown in Figure~\ref{figure:f16_level}.




\textbf{F16-Ground Collision Avoidance(GCAS)}. The final scenario describes the F-16
diving towards the ground and the GCAS autopilot trying to prevent the
collision. The GCAS brings the roll angle and its rate to $0$ and then accelerates upwards to avoid ground collision as defined by the spec 
$\phi_{\textsc{f16,gcas}}: \always_{[0,10]}h \geq 0ft$. The parameter
space is described by initial values of $\alpha\in[0.075, 0.1]rad$ and 
$\phi\in[-0.1, -0.075]rad$ . The results are illustrated in Figure~\ref{figure:f16_gcas}. 
In this case study, the ground truth and our results seem to be less well-matched
than other case studies. There are a couple of reasons for this. First,
observe that the ground truth is highly non-monotonic. Given the nonlinearity
of the ground truth, our conformal interval prediction errs on the side of
safety and marks a region unsafe. To remedy this, we would have to increase the 
number of simulations per region used to train the GP regression model
and possibly experiment with other regression models (such as a deep neural 
network regressor). 




\paragraph{Artificial Pancreas.}

Type-1 diabetes (juvenile diabetes) is a chronic condition caused by the inability of the pancreas to secrete the required amount of insulin.
Simglucose~\cite{simglucose2018} is a Python implementation of the FDA-approved Type-1 Diabetes simulator~\cite{man2014uva} which models glucose kinetics. We input a list of tuples of time and meal size to Simglucose and set the same scenario environment. Choosing patients in different age will result in different simulation trace. The parameter \textit{meal time} is constrained to be strictly increasing and the last meal time to be less than 24. For each scenario, the simulator provides traces records for different blood indicators based on a given setting environment. We are interested in checking the property of blood glucose(BG), such as if the blood glucose will violate an upper limitation in a day. In this case study, we implement our technique on this system verification by setting alpha as 0.05 and delta as 0.5. Therefore, we present the results with 95\% confidence guarantee and any subset size less than 0.5 is considered as an unknown region.

We study $4$ scenarios (Table~\ref{tab:simglucose}) describing an adolescent patient who takes $2$,$3$,$4$ and $5$ meals a day respectively. The meals of size $s_i$ are consumed at time $t_i$. The parameters space is then defined by $\paramspace=(t_1, s_1,\ldots, t_i, s_i,\ldots t_n, s_n)$, where $n$ is the total number of meals. The properties $\chi_{\textsc{simglucose},n\textsc{M}}: \always_{[0,T]}BG \leq UP$ (where $\chi\in{\phi,\psi}$), asserts that the patient's blood glucose should not go beyond the upper bound ($UP$) in time $T$. Our results predict the whole region as $100\%$ safe region with $95\%$ confidence. Besides, for the two meal case: $\psi_{\textsc{simglucose,2M}}$, our implementation result of 54.44\% matches well based on the ground truth dataset with 52.47\% unsafe volume.

\begin{table}[t]
\centering
\begin{tabular}{@{\extracolsep{\fill}}lr}
\toprule
Specification & Unsafe Volume (Ground Truth) (\%) \\
\midrule
\multicolumn{2}{l}{\small$\paramspace_{2M}\in [1, 12]\times[1, 20]\times[13, 24]\times[1, 20]$}\\
$\psi_{\textsc{simglucose,2M}}: \always_{[0,24]}BG \leq 155$ & 52.47\\
$\phi_{\textsc{simglucose,2M}}: \always_{[0,24]}BG \leq 170$ & 0\\
\midrule
\multicolumn{2}{l}{\small$\paramspace_{3M} \in [1, 8]\times[1, 20]\times[9, 16]\times[1, 20]\times[17,24]\times[1, 20]$}\\
$\phi_{\textsc{simglucose,3M}}: \always_{[0,24]}BG \leq 170$ & 1.2 \\
\midrule
\multicolumn{2}{l}{\small$\paramspace_{4M} \in [1, 6]\times[1, 20]\times[7, 12]\times[1, 20]\times[13,18]\times[1, 20]\times[19,24]\times[1,20]$}\\
$\phi_{\textsc{simglucose,4M}}: \always_{[0,24]}BG \leq 170$ & 4.7\\
\midrule
\multicolumn{2}{l}{\tiny$\paramspace_{5M} \in [1, 4]\times[1, 20]\times[5, 9]\times[1, 20]\times[10, 14]\times[1, 20]\times[15, 19]\times[1,20]\times[20, 24]\times[1,20]$}\\
$\phi_{\textsc{simglucose,5M}}: \always_{[0,24]}BG \leq 170$ & 6.6\\
\bottomrule

\end{tabular}
    \caption{\footnotesize Simglucose results with 95\% confidence level for different parameter spaces and specifications.}
    \label{tab:simglucose}
\end{table}

\myipara{Impact of Results} The parameter space partitioning produced by our algorithm helps characterize safe and unsafe regions of operation for CPS applications. This helps in selection of the appropriate model parameters or 
initial conditions during design time and runtime. Furthermore,
if the model parameters are being chosen by an outer loop supervisory control,
then the partitions that we generate create conditional contracts on the 
safety of the CPS model. Ultimately, we envision the results of our tool
to be used for constructing safety assurance cases\cite{rushby2002partitioning}.

\section{Related Work and Conclusions}\label{sec:relconc}
\mypara{Related Work}
Several verification techniques for CPS have been proposed in the recent years. Some are specifically developed to analyze models of neural networks (NN) or NN-controlled dynamical systems in order to obtain safety guarantees, using barrier functions~\cite{Royo2018ClassificationbasedAR,JimJyo}, reachability analysis~\cite{Taylor,huang2017safety,Verisig,fazlyab2019efficient}, Satisfiability Modulo Theory (SMT) based methods or mixed-integer linear program (MILP) optimizer~\cite{Reluplex,Yasser,dutta2018learning}. Most of these methods provide deterministic guarantees but face scalability issues and are restricted by the class of models they can handle (e.g. ReLU activation functions). Unlike these approaches, our method is applicable to a broader class of CPS as it treats learning-enabled CPS as black-boxes and only assumes that they can compute simulations given fixed parameters. Moreover, we reason over the joint distribution of the parameters and corresponding robust values with respect to STL properties. Therefore, our algorithm is independent of the model complexity (except for the computational complexity of the simulations) and more scalable.


Methods based on Statistical Model Checking (SMC)~\cite{SMCOverview,legay2015statistical,zuliani2010bayesian} can overcome the hurdles like scalability and nonlinearity and provide probabilistic guarantees~\cite{zareistatistical,NimaPowerTrain,YuSMCCPS,clarke2008statistical,abbas2014robustness}. These methods are based on statistical inference methods like sequential probability ratio tests~\cite{SMCOverview,NimaPowerTrain,MaheshSMC,clarke2008statistical}, Bayesian statistics~\cite{zuliani2010bayesian}, and Clopper-Pearson bounds~\cite{zareistatistical}. Another line of works use Probably Approximately Correct (PAC) learning theory to give probabilistic bounds for Markov decision processes and black-box systems~\cite{fu2014probably,Fan2017Dryvr:Systems}. In contrast to SMC and PAC-learning techniques, our approach is sample independent and can provide the required probabilistic guarantees with any number of samples. This is because we build a guaranteed regression model from the system parameters with respect to the robust satisfaction value of the corresponding STL properties. If the regression model is of poor quality (due to few samples), using the calibration step in conformal regression, the predicted (but wider) interval can
still have the same level of guarantee. Conformal regression lets us tradeoff the quality of the regression model (w.r.t. the data) and the width of the interval for which we have high-confidence property satisfaction, and {\bf not} the level of the guarantee itself. 

\mypara{Conclusions}
In this paper, we proposed a  verification framework that can search the parameter space to find the regions that lead to satisfaction or violation of given specification with probabilistic coverage guarantees. There are a couple of directions we aim to explore as future work: 1) We used a very basic version of conformal regression in Algorithm~\ref{alg:conformal}, which gives a constant confidence range $d$ across all $X$. Techniques based on quantile regression~\cite{romano2019conformalized} and locally-weighed conformal~\cite{lei2018distribution} can make $d$ a function of $X$ and give much shorter prediction intervals. 2) We plan to explore probabilistic regret bounds for Gaussian process optimization to help 
obtain (probabilistic) upper and lower bounds on the value of the
surrogate model when using GP-based regression. 

\bibliographystyle{splncs04}
\bibliography{references}

\clearpage
\appendix
\section*{Appendix}

%

\begin{figure}[h!]\centering
\includegraphics[height=5cm]{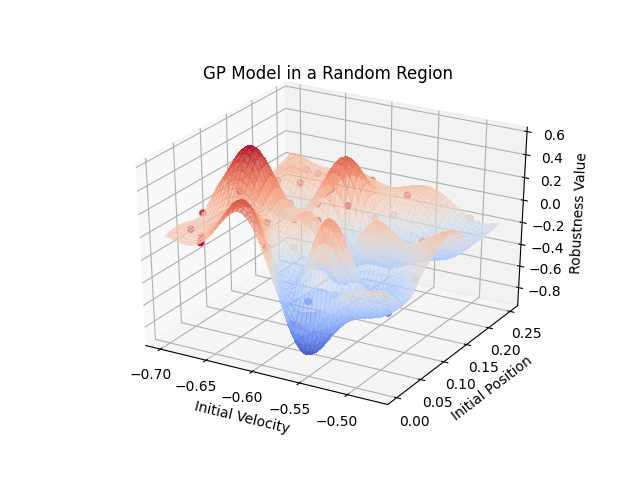}
\caption{\small Gaussian Process Regression Visualization for the Mountain Car model}
\label{figure:gp_model}
\end{figure}

\begin{figure}[h!]\centering
\subfloat[Parameter space partitioning obtained using greatest uncertainty split ($\delta_{\min}$=0.002)]{
\includegraphics[width=0.5\linewidth]{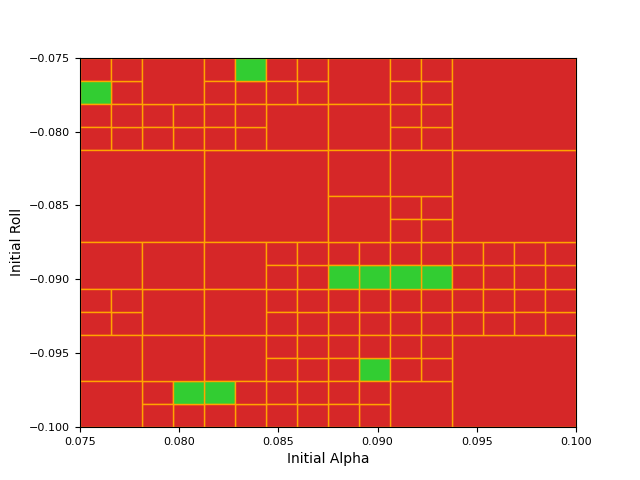}\label{gcas_regression}}
\subfloat[{GCAS Ground truth}]{
\includegraphics[width=0.5\linewidth,height=0.14\textheight]{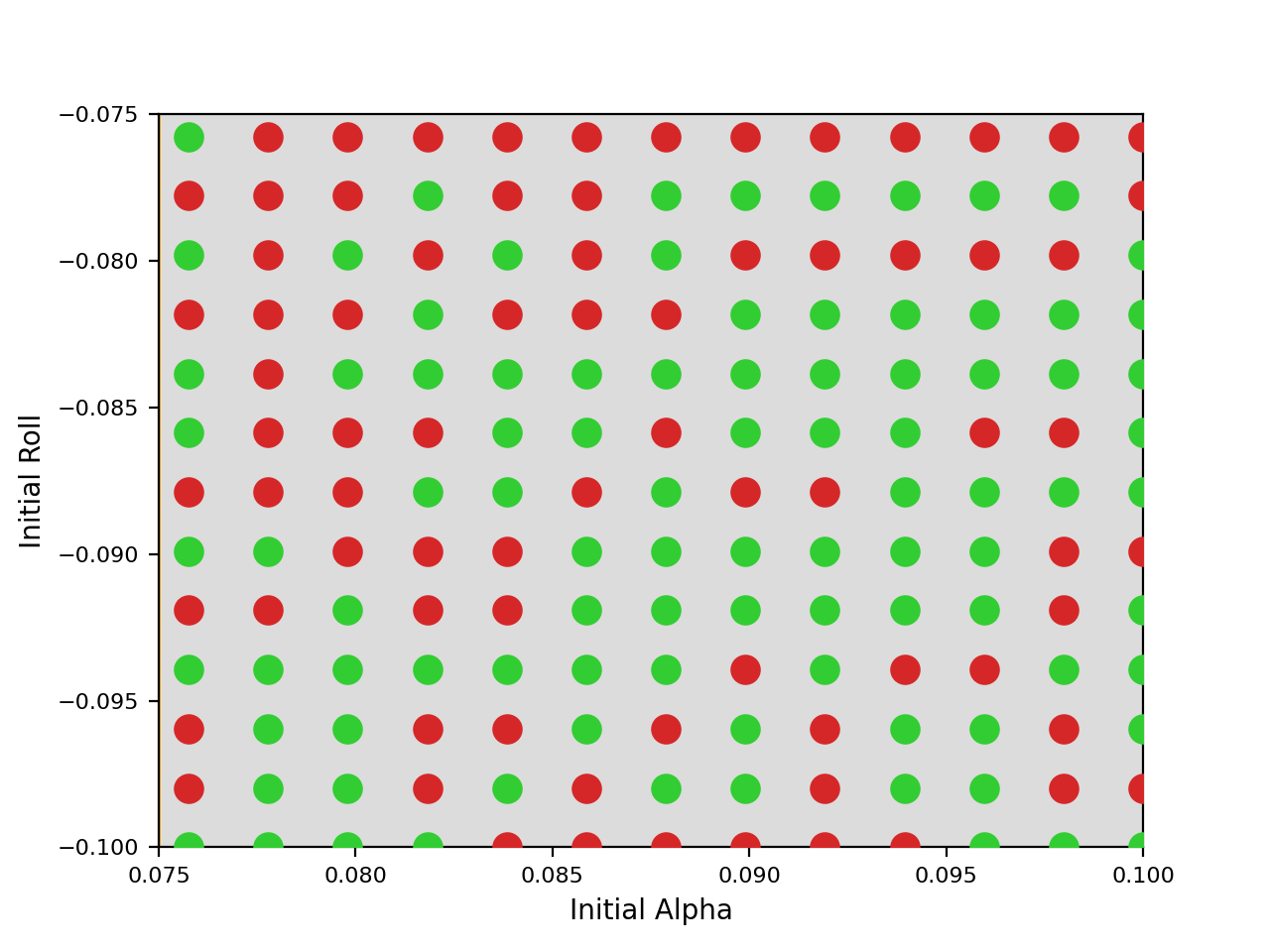}\label{gcas_gt_dot}}
\caption{\small F16 - Ground Collision Avoidance}
\label{figure:f16_gcas}
\end{figure}
%


\end{document}